\definecolor{darkgreen}{rgb}{0,0.5,0}
\newtheorem{theorem}{Theorem}[section]
\newtheorem{lemma}[theorem]{Lemma}
\newtheorem{observation}[theorem]{Observation}
\newtheorem{assumption}[theorem]{Assumption}
\newtheorem*{theorem*}{Theorem}
\newcommand{\mult}{\textsc{Mult}}
\newcommand{\multback}{\textsc{Multback}}
\newcommand{\fin}{\text{fin}}
\newcommand{\north}{\text{north}}
\newcommand{\south}{\text{south}}
\newcommand{\stay}{\text{stay}}
\newcommand{\exsync}{\textsc{Explore}}
\newcommand{\incstack}[1]{\textsc{IncreaseStackSize}(\ensuremath{#1})}
\newcommand{\initstack}[1]{\textsc{InitializeStackSize}(\ensuremath{#1})}
\newcommand{\movestack}[1]{\textsc{MoveStack}(\ensuremath{#1})}
\newcommand{\isdiv}[1]{\textsc{IsDivisible}(\ensuremath{#1})}
\newcommand{\multstack}[1]{\textsc{MultiplyStackSize}(\ensuremath{#1})}
\newcommand{\divstack}[1]{\textsc{DivideStackSize}(\ensuremath{#1})}
\newcommand{\followroute}[1]{\textsc{FollowRoute}(\ensuremath{#1})}
\newcommand{\indeg}[1]{\ensuremath{\text{indegree}({#1})}}
\newcommand{\virt}{\ensuremath{\text{Virt}}}
\newcommand{\cell}[1]{\ensuremath{\text{Cell}({#1})}}
\newcommand{\level}[1]{\ensuremath{\text{Level}({#1})}}
\newcommand{\myemail}[1]{\,$\cdot$\, {\small #1}}
\newcommand{\myaff}[1]{\,$\cdot$\, {\small #1}\par\medskip}
\newenvironment{myabstract}
{\list{}{\listparindent 1.5em%
        \itemindent    \listparindent
        \leftmargin    1cm
        \rightmargin   1cm
        \parsep        0pt}%
    \item\relax}
{\endlist}
\newenvironment{mycover}
{\list{}{\listparindent 0pt
        \itemindent    \listparindent
        \leftmargin    1cm
        \rightmargin   1cm
        \parsep        0pt}%
    \raggedright
    \item\relax}
{\endlist}
\begin{document}



\begin{mycover}
  {\huge\bfseries\boldmath Tight Bounds for Deterministic High-Dimensional Grid Exploration \par}
\bigskip
\bigskip
\bigskip

\textbf{Sebastian Brandt}
\myemail{brandts@ethz.ch}
\myaff{ETH Zurich}

\textbf{Julian Portmann}\footnote{Supported by the Swiss National Foundation, under project number 200021\_184735.}
\myemail{pjulian@ethz.ch}
\myaff{ETH Zurich}

\textbf{Jara Uitto}
\myemail{jara.uitto@aalto.fi}
\myaff{Aalto University}

\bigskip

\end{mycover}
\begin{myabstract}
\noindent\textbf{Abstract.}
	We study the problem of exploring an oriented grid with autonomous agents governed by finite automata.
	In the case of a $2$-dimensional grid, the question how many agents are required to explore the grid, or equivalently, find a hidden treasure in the grid, is fully understood in both the synchronous and the semi-synchronous setting.
	For higher dimensions, Dobrev, Narayanan, Opatrny, and Pankratov [ICALP'19] showed very recently that, surprisingly, a (small) constant number of agents suffices to find the treasure, independent of the number of dimensions, thereby disproving a conjecture by Cohen, Emek, Louidor, and Uitto [SODA'17].
	Dobrev et al.~left as an open question whether their bounds on the number of agents can be improved.
	We answer this question in the affirmative for deterministic finite automata: we show that $3$ synchronous and $4$ semi-synchronous agents suffice to explore an $n$-dimensional grid for any constant $n$.
	The bounds are optimal and notably, the matching lower bounds already hold in the $2$-dimensional case.

	Our techniques can also be used to make progress on other open questions asked by Dobrev et al.: we prove that $4$ synchronous and $5$ semi-synchronous agents suffice for \emph{polynomial-time} exploration, and we show that, under a natural assumption, $3$ synchronous and $4$ semi-synchronous agents suffice to explore \emph{unoriented} grids of arbitrary dimension (which, again, is tight).
\end{myabstract}

\thispagestyle{empty}
\setcounter{page}{0}
\newpage

\section{Introduction}
\label{sec:intro}

Grid search by mobile agents is one of the fundamental primitives in swarm robotics and a natural abstraction of foraging behavior of animals.
For example in the case of cost-efficient robots or insects, a single agent has relatively limited computation and communication capabilities and hence, many independent agents are required to efficiently solve tasks.
To understand such collective problem solving better, knowledge from distributed computing has proven valuable.
For instance, Feinerman et al.\ gave tight bounds on the time complexity of a collective grid search problem inspired by desert ants~\cite{Feinerman2012}.
In this paper, we focus on the minimum \emph{number} of agents required to solve the grid search problem.
A series of papers \cite{Emek15, Cohen2017, Brandt2018} nailed down the exact complexity of the $2$-dimensional case, that is, discovered the exact number of synchronous/semi-synchronous and deterministic/randomized finite automata needed to explore a $2$-dimensional grid.
However, the approaches in these works do not generalize (well) to higher dimensions.
The only known tight bound achieved by such a generalization is obtained by the recent protocol for the deterministic semi-synchronous 3-dimensional setting by Dobrev, Narayanan, Opatrny, and Pankratov~\cite{Dobrev2019}.

The authors of \cite{Dobrev2019} also gave a more general result:
they showed how to implement a stack data structure using only a constant number of agents governed by finite automata.
By employing this stack in their search protocols, they show how to explore an $n$-dimensional grid using only a (small) constant number of agents, for any positive integer $n$.
In particular, the number of agents is independent of the dimension $n$.

For the case of a $2$-dimensional grid the required number of agents is fully understood.
However, for higher dimensions there are still gaps between the best upper and lower bounds.
Indeed, Dobrev et al.\ left as open questions the tight complexities of exploring high-dimensional grids in the synchronous/semi-synchronous and deterministic/randomized settings.
In this work, we answer these questions for the deterministic setting.
Moreover, building on our techniques we make progress on other open questions by Dobrev et al.

\subsection{Results and Techniques}
Similarly to the approach by Dobrev et al.~\cite{Dobrev2019}, our search protocols rely on an efficient implementation of a stack data structure.
One agent is dedicated to do the actual search
while the remaining agents implement a stack (together with the searching agent that indicates the base of the stack) with their positions on the grid.
On a high level, the size of the stack encodes the cell the searching agent is supposed to explore next, relative to the current position of the searching agent.
Both our protocol and the protocol from \cite{Dobrev2019} explore the grid by repeatedly reading the stack, moving the searching agent to the cell indicated by the stack, moving the searching agent back to its original cell, and incrementing the stack.
The difficult part is to be able to effectively read the stack (without destroying the stack in the process) despite the fact that the size of the stack grows arbitrarily far beyond the number of states in the finite automaton reading the stack.
The authors of \cite{Dobrev2019} managed to implement this data structure using $4$ agents in the synchronous and $5$ agents in the semi-synchronous setting and showed how to explore oriented grids with as many agents.

One of our main contributions is to implement this stack and the operations required for reading it with only $3$ (synchronous) agents (including the searching agent), which is optimal given the grid exploration lower bound by Emek et al.~\cite{Emek15}.
We achieve this by a careful design of an encoding scheme that transforms the location of a cell to be explored into a single integer (that can be represented by the stack size) by interpreting the coordinates of the cell (relative to the current location of the searching agent) as exponents of distinct prime factors.
One crucial advantage of this specific encoding is that there is a way to read the stack (using $3$ synchronous agents), i.e., to repeatedly provide the searching agent with different parts of the encoded information, that does not destroy the encoded information, but instead changes the encoding slightly: replacing the base (prime) for one of those exponents by a different prime (and then switching to the next base prime).
The technical details why such replacement operations can be performed by $3$ synchronous agents and why they allow the searching agent to obtain the desired information are covered in Sections~\ref{sec:blocks} and~\ref{sec:protocol}.
Moreover, by adding one agent as a synchronizer, the protocol can be made to work in the semi-synchronous setting.

\begin{restatable}{theorem}{oriented}\label{thm:oriented}
	For any positive integer $n$, the $n$-dimensional (oriented) grid can be explored by $3$ synchronous finite automata, resp.\ $4$ semi-synchronous finite automata.
\end{restatable}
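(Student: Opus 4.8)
The plan is to realise the architecture outlined above. One agent, the \emph{searcher}, is responsible for visiting every cell; the remaining agents, together with the searcher (which marks the base), encode a single nonnegative integer $S$, the \emph{stack size}, through their positions on the grid. Fixing $2n$ distinct primes $p_1^{+},p_1^{-},\dots,p_n^{+},p_n^{-}$, I would read $S = \prod_{i} (p_i^{+})^{a_i}\,(p_i^{-})^{b_i}$ as the cell obtained from the searcher's current cell by moving $a_i$ steps in the positive and $b_i$ steps in the negative $i$-th direction (with at most one of $a_i,b_i$ nonzero). The search runs in rounds, each of which (i) \emph{reads} $S$ and walks the searcher to the encoded cell, (ii) walks the searcher back, and (iii) increments $S$ so that it encodes the next cell in a fixed enumeration of the whole grid. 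Since the enumeration is surjective the searcher eventually visits the treasure, and since $n$ is a constant only $O(n)$ primes are ever needed, so the resulting automata have finitely many states (their number depending only on $n$).

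The crux — and what I expect to be the main obstacle — is to let the searcher \emph{read} $S$ with only $3$ synchronous agents, even though $S$ eventually dwarfs the automaton's state count, and to do so \emph{without destroying} $S$. Here the prime-power encoding is essential. To recover the exponent $a_i$ of $p_i^{+}$, the agents repeatedly run a divisibility test \isdiv{p_i^{+}} on the current stack size; each time it succeeds they perform \divstack{p_i^{+}} and, simultaneously, \multstack{q_i^{-}} for a fresh prime $q_i^{-}$ drawn from a second, disjoint pool, while the searcher advances one step in the positive $i$-th direction. After $a_i$ such iterations the searcher sits $a_i$ steps away and the factor $(p_i^{+})^{a_i}$ has become $(q_i^{-})^{a_i}$; crucially, flipping the sign of the slot means that once the full displacement has been read off and walked, the stack automatically encodes the \emph{reverse} displacement — from the target cell back to the start — now over the second prime pool. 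A symmetric pass then reads that encoding, walks the searcher home, and restores the encoding over the first pool. No information is ever lost, and cycling between the two pools keeps the number of primes, hence the state space, bounded in terms of $n$ alone.

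It remains to implement the primitives \isdiv{\cdot}, \multstack{\cdot}, \divstack{\cdot} on an arbitrarily large stack using only the relative positions of $3$ synchronous agents, to carry the stack along as the searcher moves, and to fix the enumeration and the increment step — all of which must run in $O(1)$ states. The first three tasks are exactly the content of the block machinery of Sections~\ref{sec:blocks} and~\ref{sec:protocol}, and the per-step bookkeeping (which coordinate is currently being read, which prime pool is active, the sign of the current slot, the go/return phase) fits in constantly many states because $n$ is fixed. For the enumeration I would sweep the cells by a boustrophedon traversal of nested boxes around the searcher's home cell, so that consecutive cells are grid neighbours; then incrementing $S$ is a single \multstack{\cdot} or \divstack{\cdot} by one prime, and the identity of that prime and the direction of the step can be decided from the current encoded tuple by the same bounded-state scan used for reading (for instance by tracking the relevant parities as the exponents are read out). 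Because each round re-encodes the visited cell exactly, the invariant ``$S$ is a valid encoding of the next cell to explore'' is preserved from round to round, and surjectivity of the sweep gives the coverage guarantee; this yields the $3$-agent synchronous bound, which matches the lower bound of Emek et al.~\cite{Emek15}.

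For the semi-synchronous case the only thing that breaks is synchrony: an adversary activating the agents at different times can desynchronise the leapfrog moves implementing \multstack{\cdot}/\divstack{\cdot} from the searcher's steps. I would add a fourth agent acting purely as a pace-maker, stationed next to the activity and advancing a logical clock by one tick only after it has observed every other agent complete its move for the current tick — the standard simulation of a synchronous protocol by a semi-synchronous one at the cost of one extra agent. The $3$-agent synchronous protocol then runs verbatim on the logical clock, giving the $4$-agent semi-synchronous bound and completing the proof.
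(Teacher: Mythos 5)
Your core architecture matches the paper's: a stack whose size encodes the target cell as a product of prime powers, non-destructive reading by repeatedly swapping one prime factor for another (one searcher step per swap), the constant-$k$ primitives \isdiv{k}, \divstack{k}, \multstack{k} realized by three synchronous agents, and a fourth agent as a synchronizer in the semi-synchronous case. Your variant with a second prime pool encoding the return displacement is a cosmetic difference from the paper's scheme, which uses the single auxiliary prime $2$ (keeping the counter odd so the swap count is unambiguous), restores the stack after each dimension, and walks the searcher home by re-reading the same stack with flipped directions; both are sound.

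The genuine gap is your enumeration and increment step. You propose a boustrophedon traversal of nested boxes so that consecutive target cells are grid neighbours, making the increment a single \multstack{\cdot} or \divstack{\cdot}, with the choice of prime and direction ``decided from the current encoded tuple by \ldots tracking the relevant parities.'' Parities do not suffice: deciding when to turn in a nested-box sweep requires knowing whether a coordinate has reached the current box radius, i.e., comparing an unbounded exponent against an unbounded, growing radius. That radius must itself be stored (as yet another unbounded quantity), and the comparison is not one of your constant-$k$ primitives; it is essentially the non-constant \multstack{h}/\divstack{h} problem that the paper's Section~\ref{sec:poly} solves only by adding a \emph{fourth} synchronous agent. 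Indeed, if your increment worked with $3$ agents you would get a spiral-like, polynomial-time $3$-agent protocol, which the paper explicitly leaves open. The fix is simple and is what the paper does: drop the neighbour-to-neighbour requirement, keep an abstract counter $X$ that is incremented additively by $2$ each round (trivially implementable, and keeping $X$ odd), and re-derive the entire target vector from the prime factorization of $X$ at the start of each round, accepting that consecutive targets are far apart. With that substitution (and handling signs by iterating over all $2^n$ sign patterns per counter value, rather than encoding them in the stack), your argument goes through.
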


\paragraph{Unoriented Grids.}
An underlying assumption of the setting considered so far is that the agents are aware of the $2n$ cardinal directions, i.e., they know for each of the $n$ dimensions of the grid which two adjacent cells are neighbors in that dimension, and each dimension is oriented.
Or, to put it simply, the agents know which directions are north, south, etc.; in particular the directions are globally consistent.
In contrast, in the \emph{unoriented} setting considered in~\cite{Dobrev2019}, each cell is endowed with a labeling that indicates for each cell which neighbor is north, south, etc.\ (and for each of the $2n$ directions there is exactly one neighbor), but there is no consistency guarantee between the directions indicated by the labels of different cells.
In fact, the setting does not preclude that after going north twice, you do not end up in the cell in which you started.

In their work, Dobrev et al.\ also ask ``How many additional agents are necessary to solve the problem in \textit{unoriented grids}?''.
We show, perhaps surprisingly, that the unoriented case is no harder than the oriented case given the following (natural) assumption:
If we follow some fixed direction, we never end up back in the same cell where we started.

\newcounter{unoricounter}
\setcounter{unoricounter}{\value{theorem}}
\newcounter{unorisection}
\setcounter{unorisection}{\value{section}}
\begin{theorem}[Simplified]\label{thm:unori}
	Under a natural assumption, for any positive integer $n$, $3$ synchronous finite automata, resp.\ $4$ semi-synchronous finite automata, suffice to explore any $n$-dimensional unoriented grid.
\end{theorem}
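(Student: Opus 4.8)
The plan is to run the protocol behind \Cref{thm:oriented} essentially unchanged, reinterpreting every move of every agent through the local port labels the agents see rather than through a global orientation. The searching agent maintains a \emph{virtual frame}: at the cell it currently occupies it identifies the $2n$ incident port labels with the $2n$ directions of a virtual $n$-dimensional oriented grid, and any move the oriented protocol would make ``in direction $d$'' is executed as ``leave the current cell through the port labelled $d$''. The remaining agents maintain the stack exactly as before: the stack machinery refers only to collisions between agents and to the searching agent moving one cell at a time and undoing a \emph{straight} bounded excursion — never to a global orientation. Hence the only thing that can break is geometric, and the whole content of the proof is to show that the stated assumption — following a fixed direction never returns to the start cell — prevents these breakages.

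The key structural lemma is the following. For a cell $c$ and direction $d$, call the sequence obtained by starting at $c$ and repeatedly following the port labelled $d$ the \emph{$d$-ray} from $c$. If such a ray revisited a cell $c'$, the portion between the two visits to $c'$ would be a finite closed walk each of whose steps follows the label $d$; that is, following $d$ from $c'$ returns to $c'$, contradicting the assumption. So every $d$-ray is an infinite \emph{simple} path, and so is the $\bar d$-ray; together they make the ``virtual line through $c$ in the dimension of $d$'' an infinite simple line, which is precisely what is needed for the searching agent never to corrupt its stack and to return, after each exploration round, to a well-defined home cell.

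What remains — and what I expect to be the main obstacle — is \emph{exhaustiveness}: that the searching agent visits every cell. In the oriented case this is immediate, since the cell explored for an offset vector $v$ is just the translate of home by $v$, so the stack-driven enumeration of all $v \in \mathbb{Z}^n$ reaches every cell. Here the analogous map $\phi\colon\mathbb{Z}^n\to\text{Cells}$ — ``walk $v_1$ virtual steps along the first axis, then $v_2$ along the second, and so on'' — is built from the scrambled labels, and a priori its image could be a proper subregion, since two infinite simple lines in a grid need not intersect. I would prove exhaustiveness in two steps. First, no \emph{finite} region can trap the searching agent, because any region closed under the virtual moves contains a whole $d$-ray and is therefore infinite. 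Second, a connectivity argument: starting from the image of $\phi$ and repeatedly extending virtual lines, one reaches every cell of the (connected) grid, so that the exhaustive enumeration of offset vectors forces every cell to be visited. Turning the second step into a rigorous argument, and in particular isolating the exact ``natural assumption'' that makes it go through, is the bulk of the work — which is why the statement above is marked as simplified.

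Finally, the semi-synchronous bound follows exactly as in \Cref{thm:oriented}: adding one further agent as a synchronizer lets the three working agents be driven in lockstep, giving the $4$-agent bound.
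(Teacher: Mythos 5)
There is a genuine gap at the heart of your first claim, namely that ``the stack machinery refers only to collisions between agents and to the searching agent moving one cell at a time'' and therefore works unchanged once moves are reinterpreted through port labels. Your structural lemma only shows that a $d$-ray is an infinite simple path when traversed \emph{forward}; the stack subroutines, however, require a \emph{lone} agent (e.g.\ $a_2$ in \multstack{k}, which walks towards the base at speed $1$ while $a_3$ moves independently elsewhere) to traverse the ray \emph{backward}. On an unoriented grid this reverse step is not well defined: a cell $c'$ may have several neighbours $c''$ such that leaving $c''$ via port $\ell$ leads to $c'$, and the labels give the lone agent no way to tell which of these belongs to the intended stack. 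Assumption~\ref{ass:cycle} does not remove this ambiguity. This is exactly the obstacle the paper's construction is built to overcome: it replaces the naive port-$\ell$ ray by a \emph{virtual} stack obtained as a DFS-like traversal of an auxiliary directed forest (whose acyclicity is what the assumption is actually used for), with stack positions being pairs (cell, level) so that a single physical cell may occur several times; Observation~\ref{obs:norepeat} and Lemma~\ref{lem:virtualmove} then show that a lone automaton can step both up and down this virtual stack. Without something of this kind your three-agent stack simply cannot be read.

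The second gap is the one you flag yourself, exhaustiveness, but your proposed fix goes in a different direction from what actually works and is not obviously repairable: the protocol only ever launches routes from the origin, so ``repeatedly extending virtual lines'' from other cells is not something the agents do, and the image of your map $\phi$ built from scrambled labels can genuinely be a proper subregion of the grid. The paper resolves this not by analysing $\phi$ but by making $a_1$'s exploration moves \emph{globally} consistent via the handrail technique: $a_1$ fixes names and orientations for the $n$ true dimensions at the origin and preserves this knowledge across moves, at the cost of requiring a second agent to be co-located with $a_1$ whenever it moves --- a requirement one must then verify the protocol satisfies (it does, because \movestack{g,i} is the only subroutine in which $a_1$ moves, and other agents pass through $a_1$'s cell there). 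Relatedly, your plan silently assumes \movestack{g,i} still works by translating both stack endpoints one cell in the same direction, but this relies on commutativity of directions, which fails on unoriented grids; the paper has to reimplement this subroutine by having $a_2$ and $a_3$ walk down the old virtual stack, step to the new root, and walk up a fresh virtual stack at matched speeds so that the new stack inherits the old size.
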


The key idea to obtain Theorem~\ref{thm:unori} is that, even without a globally consistent orientation, we can implement a (virtual) stack.
Due to the missing consistency, the same cell may occur repeatedly in the stack, but we can show that we can bound the number of occurrences for each cell and that the agents can distinguish between the different occurrences of the same cell.
In essence, we will show that the stack corresponds to (a part of) a DFS exploration of an infinite tree consisting of those edges (between cells) that point north.

\paragraph{Polynomial Time Protocol.}
The task of exploring the entire grid can equivalently be described as finding a treasure located at some distance $D$ from the starting point.
This formulation allows us to discuss the \emph{efficiency} of a protocol, i.e., its runtime with respect to $D$.
We observe that our encoding scheme for the oriented grid using only $3$ synchronous, resp.\ $4$ semi-synchronous, agents might result in exponential time.
However, we show that with one additional agent, certain stack operations can be extended to work for non-constant values.
This allows us to use a different exploration scheme, proposed by Dobrev et al.~\cite{Dobrev2019}, which is similar to the well-known spiral search, resulting in a polynomial runtime.
\begin{restatable}{theorem}{thmpoly}\label{thm: poly}
	For any positive integer $n$, the $n$-dimensional (oriented) grid can be explored by:
	(1) 4 synchronous agents in time $O(V(D)^2)$, and
	(2) 5 semi-synchronous agents in time $O(V(D)^3)$,
	where $V(D) = \Theta(D^n)$ is the volume of the $\ell_1$-ball of radius $D$.
\end{restatable}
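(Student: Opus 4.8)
The plan is to keep the overall architecture behind Theorem~\ref{thm:oriented} — a dedicated searching agent together with a stack whose size is encoded by the positions of the remaining agents, driven by a loop of the form ``read the stack, move the searching agent to the indicated cell and back, update the stack'' — but to make two changes. First, the exponential running time of the $3$-agent protocol stems solely from its prime-power encoding: the stack size equals a distance actually travelled on the grid, and encoding a cell $x=(x_1,\dots,x_n)$ as $\prod_i p_i^{x_i}$ makes this distance super-polynomial in $\|x\|_\infty$ even for cells near the origin. I would instead fix a search radius $D$ and encode a cell $x$ with $\|x\|_\infty\le D$ by its value in the mixed-radix system with the single base $b:=2D+1$, namely $\mathrm{code}(x)=\sum_{i=1}^{n}(x_i+D)\,b^{\,i-1}\in\{0,\dots,b^{\,n}-1\}$. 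Then the stack size never exceeds $b^{\,n}=\Theta(V(D))$, and sweeping $\mathrm{code}$ through $0,1,\dots,b^{\,n}-1$ visits every cell of the $\ell_\infty$-ball of radius $D$, which contains the $\ell_1$-ball and whose cardinality is still $\Theta(V(D))$; so a treasure at distance $D$ is seen during such a sweep. Second, to run this sweep I would plug the new stack into the spiral-search-style exploration scheme of Dobrev et al.~\cite{Dobrev2019}; what that scheme asks of the stack is exactly the ability to increment it, to read off its base-$b$ digits (so the searching agent can reconstruct $x$, walk to cell $x$, and return), and to divide/multiply it by $b$.

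The heart of the proof is implementing these base-$b$ operations on the agent-encoded stack when $b$ is \emph{not} a constant. With only the three agents of Theorem~\ref{thm:oriented}, a single division step can advance a marker by a bounded amount, so a stack value cannot be divided by a non-constant $b$; adding a fourth agent removes this barrier. The fourth agent plays the role of a movable ``ruler mark'': as the current stack of size $S$ is consumed, it is pushed forward by one unit every time a contiguous block of $b$ units of the old stack has been used up, so that when the old stack is exhausted the fourth agent sits at distance $\lfloor S/b\rfloor$ and the leftover scanned at the end is $S\bmod b$; multiplication by $b$ is the same operation run backwards. Counting off a block of length $b$ is itself carried out on the grid, reusing the recursive counting machinery already built for Theorem~\ref{thm:oriented}, and — exactly as the prime-power read does — the procedure is arranged to leave the stack value recoverable rather than destroyed, so the loop can proceed to the increment. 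Establishing that one extra agent genuinely suffices for these non-constant division and multiplication steps, that all four agents return to canonical positions without collisions, and that each such operation runs in time $O(\text{current stack size})=O(V(D))$, is the step I expect to be the main obstacle; once it is granted, the remaining bookkeeping is routine.

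Given this, one sweep of radius $D$ consists of $\Theta(V(D))$ iterations, each performing a constant number of $O(V(D))$-time stack operations (digit extraction, increment) plus a round trip of the searching agent over distance $O(D)$, hence $O(V(D))$ synchronous rounds per iteration and $O(V(D)^{2})$ in total. Running sweeps for radii $D=1,2,4,8,\dots$ until the treasure is found keeps the running time within a constant factor of the last sweep, since the per-sweep cost grows by the constant factor $2^{2n}$ each time; thus $4$ synchronous agents locate a treasure at distance $D$ in $O(V(D)^{2})$ rounds, which is~(1). For~(2) I would add the synchronizer agent of Theorem~\ref{thm:oriented} for a total of $5$ agents and emulate the synchronous protocol in the semi-synchronous model by the standard technique. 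The new point is that during a non-constant stack operation the participating agents are spread over $\Theta(V(D))$ cells, so after each of its $\Theta(V(D))$ elementary moves the synchronizer must traverse the active stack region to re-establish synchrony; this inflates each stack operation to $O(V(D)^{2})$ semi-synchronous rounds, and with $\Theta(V(D))$ operations over the (geometrically growing) sequence of radii the total becomes $O(V(D)^{3})$, which is~(2).
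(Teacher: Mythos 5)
Your proposal reproduces the outer architecture of the paper's argument correctly: one additional agent to enable stack operations with a non-constant base, a positional (base-$h$) encoding of the cells of a hypercube/ball so that the maximum stack size is $\Theta(V(D))$ rather than exponential, a geometrically growing sequence of search radii so that the last sweep dominates the cost, and the accounting $\Theta(V(D))$ iterations $\times$ $O(V(D))$ per stack operation $= O(V(D)^2)$ synchronously, with one extra factor of $V(D)$ for the semi-synchronous synchronizer. All of that matches the paper. But the one step you explicitly defer --- ``establishing that one extra agent genuinely suffices for these non-constant division and multiplication steps'' --- is precisely the new content of this theorem, and your sketched mechanism for it does not work as stated. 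To ``count off a contiguous block of $b$ units'' for non-constant $b$, some agent must physically measure a distance of $b$ against the main stack; since the agents involved are in different cells, the one that detects ``$b$ steps completed'' (say, by reaching the base) must then communicate a stopping signal to the ruler-mark agent, which in the meantime has no way of knowing when to stop. This is exactly the synchronization problem that the constant-$k$ speed-ratio protocols of Section~\ref{sec:blocks} are engineered to avoid, and it does not disappear by adding a fourth agent unless you say concretely how the four roles (base, quotient marker, storage of $b$, boundary of the consumed stack) are distributed and coordinated. The ``recursive counting machinery of Theorem~\ref{thm:oriented}'' only handles constant moduli, so it cannot be reused as a black box here; moreover your base $b = 2D+1$ is odd, which is incompatible with the only implementation the paper knows how to give.

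The paper's actual resolution is quite different and is worth contrasting. It restricts to $h = 2^i$ and stores $h$ itself as a \emph{second} stack: $a_2, a_3$ sit at distance $h$ from $a_1$, while $a_4$ sits at distance $X$ (the main stack). Multiplication of $X$ by $h$ is then reduced to $i$ successive doublings of the $(a_1, a_3, a_4)$-stack, where the number $i$ of doublings is ``counted'' by converting the $h$-stack from $2^i$ to $3^i$ one prime factor at a time (each iteration of a while loop does $\divstack{2}$, $\multstack{3}$ on the $h$-stack and $\multstack{2}$ on the main stack); a second while loop converts $3^i$ back to $2^i$ to restore $h$. Everything is thus built from the already-verified constant-base subroutines applied to different triples of agents, and no measurement of a non-constant distance ever has to be transferred between separated agents. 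If you want to salvage your write-up, you should either adopt this $2^i \leftrightarrow 3^i$ bookkeeping trick (and change your base from $2D+1$ to a power of $2$, which still covers the $\ell_\infty$-ball after adjusting constants) or supply a genuinely worked-out protocol for your block-counting scheme; as it stands the heart of the proof is missing.
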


\newcommand{\sync}{\ensuremath{\mathcal{FSYNC}}}
\newcommand{\ssync}{\ensuremath{\mathcal{SSYNC}}}
\newcommand{\async}{\ensuremath{\mathcal{ASYNC}}}

\subsection{Further Related Work}
In a typical graph exploration setting, we are given a graph where initially, one or more mobile agents are placed on some vertices of the graph.
The agents are able to traverse along the edges and their goal is to \emph{explore} the graph, that is, visit every node or edge of the graph.
Equivalently, one can think of searching for a treasure hidden on an edge or a node of the graph.
Graph exploration has been widely studied in the literature (see, for example, \cite{Rollik1979, Panaite1998, Deng1999, Albers2000, Diks2004, FraigniaudIPPP2005}) and it comes in many variants.

A classic setting is the \emph{cow-path} problem, where a single agent, the cow, is searching for adversarially hidden food on a path~\cite{Beck1964, Baeza-Yates1993}.
The goal for the cow is to minimize the number of edge-traversals until the food is found.
It is known that a simple spiral search is optimal and this algorithm also generalizes to the case of grids.
This problem was also studied in the case of many cows~\cite{Lopez-Ortiz2001}.
Closely related to our work is the exploration of \emph{labyrinths}, i.e., $2$-dimensional grids where some cells are blocked~\cite{Budach1978}.
It is known that two finite automata or one automaton with two pebbles (movable marker) suffice for co-finite labyrinths, where a finite amount of cells are not blocked~\cite{Blum1978}.
Finite labyrinths, where a finite amount cells are blocked, can be explored with one automaton and four pebbles, whereas one automaton and one pebble is not enough~\cite{Blum1977, Hoffmann1981}.
An agent with $\Theta(\log \log n)$ pebbles can explore all graphs and this bound is tight~\cite{Disser2016}.

In the case of many agents, the agents typically operate in \emph{look-compute-move} cycles.
First, the agents take a local snapshot, then decide on the next operation, and finally, execute the operation.
Graph exploration can be divided into \emph{synchronous} (\sync{}), \emph{semi-synchronous} (\ssync{}), and \emph{asynchronous} (\async{}) variants~\cite{Sugihara1996, Suzuki96, Suzuki1999}.
In the \sync{} setting, the execution is divided into synchronous rounds, where in every round, every agent executes one cycle.
The execution in \ssync{} consists of discrete time steps, where in each step, a subset of the agents executes one atomic cycle.
In the \async{} setting, the cycles are not (necessarily) atomic.
In this paper, we study the \sync{} and the \ssync{} settings.

For finite graphs, a \emph{random walk} provides a simple algorithm that explores the graph in polynomial time~\cite{Aleliunas1979}.
In the case of an infinite $n$-dimensional grid, a random walk finds the treasure with probability $1$ if $n \leq 2$.
However, the expected hitting time, i.e., the time to find the treasure, tends to infinity.
Cohen et al.\ showed that even for the case of two (collaborating) randomized agents governed by finite automata, one cannot achieve any finite hitting time for $n \geq 2$~\cite{Cohen2017}.
Very recently, Dobrev et al.\ showed that $3$ randomized \sync{} and $4$ randomized \ssync{} agents suffice to achieve a finite hitting time for any $n$~\cite{Dobrev2019}.
In this work, we achieve the same bounds with deterministic agents.

This work follows a series of papers inspired the work by Feinerman et al., where they studied the time it takes to find a treasure in a $2$-dimensional grid by $k$ non-communicating agents governed by Turing machines~\cite{Feinerman2012}.
They showed that the time complexity of this task is $\Theta(D^2/k + D)$, where $D$ is the distance from the origin to the treasure.
This bound can be matched by finite automata that are allowed to communicate within the same cell~\cite{Emek2014}.
Emek et al.\ asked what is the minimum number of agents required to find the treasure~\cite{Emek15}.
They showed that at least $3$ synchronous deterministic agents are required and that $3$ synchronous deterministic, $4$ semi-synchronous deterministic, and $3$ semi-synchronous randomized agents are enough.
Cohen et al.\ \cite{Cohen2017}\ and Brandt et al.\ \cite{Brandt2018}\ showed the matching lower bounds for the randomized and deterministic semi-synchronous cases, respectively.

\section{Preliminaries}

\paragraph{Grids.}
We consider the problem of exploring the infinite $n$-dimensional grid, whose vertices are the elements of $\mathbb{Z}^n$, which we refer to as \emph{cells}.
A cell $c = (c_1, \dots, c_i, \dots, c_n)$ is described by its coordinates and two cells $c$ and $c'$ are adjacent (i.e., connected by an edge) if they differ in one coordinate by $1$, i.e., there is a dimension $i$ such that $|c_i - c'_i| = 1$ and $c_j = c'_j$ for $j \neq i$.
When talking about distance, we will use the $\ell_1$ or \emph{Manhattan distance}, which is defined as $d(c, c') = \sum_i |c_i - c'_i|$.

In the \emph{oriented} case, we assume that there is a consistent labeling of the edges by both of its endpoints, which in the 2-dimensional case can be thought of as the directions of a compass: north, south, east, and west.
In general, an edge $(c, c')$ is labeled by $(+1, i)$ from the side of $c$ (and thus $(-1, i)$ from the side of $c'$) if we have that $c_i + 1 = c'_i$.

For \emph{unoriented} grids, we assume that each endpoint of an edge has a label from $\{1, \dots, 2n\}$.
We will also refer to these labels as the \emph{ports} of a cell.
The only assumption we make is that the labels around each cell are pairwise distinct, i.e., each cell has every port from $1$ to $2n$ exactly once.
Thus, each edge can receive any pair of labels from $\{1, \dots, 2n\}$.

\paragraph{Exploration.}
The exploration is performed by $m$ agents, $a_1, \dots, a_m$, which are initially all placed in the same cell, called the \emph{origin}.
W.l.o.g.\ we assume the origin to have coordinates $(0, \dots, 0)$.
The agents cannot distinguish different cells (including the origin); in particular, they do not know the coordinates of the cell they are in.
Their behavior and movement is controlled by a deterministic finite automaton.
While we require all agents to use the same automaton, they may start in different initial states. (As we only consider protocols with constantly many agents, one can equivalently assume each agent to be controlled by an individual automaton, as we can combine $m$ automata into one by using disjoint state spaces.)
Agents can only communicate if they are in the same cell:
each agent senses the states for which there is an agent that occupies the same cell, and performs its next move and state transition based on this information.
For oriented grids, such a move is described by a direction and dimension.

In the case of unoriented grids, we assume that agents can also see both labels of each incident edge, and perform their decisions based on this information as well.
A move is then described by choosing a port of the current cell and moving along this edge.
Previous work by Dobrev et al. \cite{Dobrev2019} used an essentially equivalent definition:
Each agent could only see the label on its side of each incident edge, but once it arrived in the new cell by traversing some edge, it would obtain the information about the second label on the edge it traversed.
We choose to formalize the model in a slightly different way, as it will simplify the description of our algorithms.
However, we emphasize that for our purposes, the two models can be used interchangeably since within $2n$ steps in the model of Dobrev et al., the agents can learn all information that we assume the agents can immediately see.

Formally, we have a state space $Q$, a transition function $\delta$, and an initial state $q^0_i$ for every agent $a_i$.
For oriented grids, the transition function has the form: $\delta: Q \times 2^Q \to Q \times (\{-1,+1\} \times \{0, 1, \dots, n\})$.
The function maps an agent in state $q \in Q$, which observes the set of states for which there is an agent occupying the same cell, to a new state $q' \in Q$ and a movement, which is described by the direction ($-1$ or $+1$) and the dimension (from $1$ to $n$) along which the agent moves to the respective neighboring cell, where an agent can also choose to \emph{stay} in the same cell which is described by dimension $0$.
We will say that an agent moves \emph{north} if its movement is $(+1, 1)$, and \emph{south} if it is $(-1, 1)$.

For unoriented grids, we change the definition of the transition function slightly to $\delta: Q \times 2^Q \times \{1, \dots, 2n\}^{2n} \to Q \times (\{-1,+1\} \times \{0, 1, \dots, n\})$.
The function maps an agent in state $q \in Q$, which observes both the set of states for which there is an agent occupying the same cell, and, for each port, the other label on the edge corresponding to that port, to a new state $q' \in Q$ and a movement, which is specified by the port via which the agent leaves the current cell, or 0, in which case the agent does not move.

\paragraph{The Schedule.}
Time is divided into discrete units, where in each time step, a set of \emph{active} agents performs a \emph{look-compute-move} cycle.
First, an agent senses the states of all agents in the same cell (and in the case of unoriented grids both of the labels on all incident edges), then it applies the transition function to its own state and all sensed information, and finally it changes its state and moves as indicated by the result.
We assume that one such cycle is atomic, i.e., cycles that start at different times do not overlap.

For the \emph{synchronous} or \sync{} model, we assume that all agents are active at every time step.
We call the system \emph{semi-synchronous}, or the \ssync{} variant, if at every time step only a subset of agents, chosen by an adversary, is active.
While the adversary knows all information about the agents and their behavior, it must schedule each agent infinitely often, to avoid trivial impossibilities.

\paragraph{Exploration Cost.}
Finally, if we discuss the efficiency of a protocol, we consider the following problem, which is equivalent to exploring the grid:
the agents are tasked to find a treasure, which is hidden at some distance $D$ from the origin (without the agents knowing the value of $D$).
This enables us to measure the time or \emph{exploration cost} it takes to find the treasure with respect to $D$.
In the synchronous setting, we measure the exploration cost as the number of time steps needed for an agent to arrive at the cell containing the treasure.
As, in the semi-synchronous model, this number of steps depends on the schedule, we instead define the exploration cost as the total distance traveled by all agents in this setting.

\section{Building Blocks}\label{sec:blocks}

\paragraph{Encoding Information as a Stack.}
Dobrev et al. \cite{Dobrev2019} introduced the idea of using multiple agents to implement a \emph{stack}.
In its simplest form, a stack is just a pair of agents, whose distance encodes some information.
However, to allow for manipulations of the stack, more agents are needed.
Our protocol for exploring $n$-dimensional grids with $3$ synchronous, resp.\ $4$ semi-synchronous, agents will consist of subroutines that involve manipulations of the stack.
The relevant parameter will be the \emph{stack size}, denoted by $X$, which is defined as the distance between the \emph{base} of the stack and the \emph{end} of the stack.
The base of the stack is the location of agent $a_1$, and the end of the stack is the location of the other agents.
We will only be interested in the stack and its size at the very beginning and very end of each subroutine; at these points in time all agents except $a_1$ are guaranteed to be in the same cell, and this cell is guaranteed to be reachable from the cell containing $a_1$ by going repeatedly north, making the notion of a stack well-defined.
Whenever we refer to the base, end, or size of the stack \emph{during} some subroutine, we mean the respective notion at the beginning of the subroutine.

In this section, we will describe the subroutines that form the building blocks of our exploration algorithm.
Moreover, we will show for both the synchronous and the semi-synchronous setting how to implement the subroutines with the desired number of agents.

In \cite{Dobrev2019}, the authors show how to multiply the current stack size by $2$, resp.\ divide it by $2$, using $3$ synchronous agents.
This also provides a way to check whether the current stack size is divisible by $2$.
The idea behind the implementation is simple: while agent $a_1$ stays at the base of the stack, the other two agents, initially located at the end of the current stack, move with different speeds\footnote{An agent moves with speed $1/j$ in some direction if it repeatedly performs the following behavior: first it takes one step in the chosen direction, and then it waits for $j-1$ steps. Note that our speed of $1/j$ is the same as speed $j$ in \cite{Dobrev2019}.}, $a_3$ either away from or towards the base of the stack, and $a_2$ first towards the base, and then reversing direction when the base is reached.
The operation is completed when $a_2$ and $a_3$ meet again (after $a_2$ visited the base).
By choosing a speed of $1$ for $a_2$, and a speed of $1/3$ for $a_3$, and letting move $a_3$ \emph{towards} the base, we achieve that the stack size is halved; by choosing the same speeds and letting move $a_3$ \emph{away from} the base, we achieve that the stack size is doubled.

We will need similar subroutines as building blocks for both our synchronous and semi-synchronous protocols.
More precisely, given a positive integer $k \geq 2$, we want the agents to be able to perform the following operations.
\begin{itemize}
	\item \multstack{k}: Multiply the stack size by $k$.
	\item \isdiv{k}: Check whether the current stack size is divisible by $k$.
	\item \divstack{k}: If the stack size is divisible by $k$, divide the stack size by $k$.
\end{itemize}
We will only require the agents to be able to perform these operations for constantly many $k$, where the constant depends (only) on the dimension $n$ of the grid.

To implement these operations, we simply adapt the protocols for the case $k=2$ from \cite{Dobrev2019} by choosing the speeds of $1/(k-1)$ (instead of $1$) for $a_2$ and $1/(k+1)$ (instead of $1/3$) for $a_3$.
More precisely, we implement the desired operations using $3$ synchronous agents as follows.

\paragraph{\multstack{k}}
While it is usually easier to understand the behavior of an agent if it is described without specifying the exact states and the transition function, we will provide the latter for subroutine \multstack{k} to give an example how to translate the agents' behaviors described in this work into the formal specification of a finite automaton.
Let $k \geq 2$ be a positive integer.
As usual we assume that $a_2$ and $a_3$ are in the same cell $c'$, and $a_1$ is in a cell $c \neq c'$ such that $c'$ can be reached from $c$ by going north repeatedly (i.e., $c$ and $c'$ differ only in the first coordinate, and $c$ has a smaller first coordinate than $c'$).

In subroutine \multstack{k}, we denote the starting state of each agent $a_i$ by $\mult^0_{i, k}$.
Apart from state $\mult^0_{i, k}$, we will use $2k-2$ other states for agent $a_2$, denoted by $\mult^1_{2, k}, \dots, \mult^{k-2}_{2, k}$, $\multback^0_{2, k}, \dots, \multback^{k-2}_{2, k}$ and $\mult^\fin_{2, k}$, and $k+1$ other states for agent $a_3$, denoted by $\mult^1_{3, k}, \dots, \mult^{k}_{3, k}$, and $\mult^\fin_{3, k}$.
Agent $a_1$ always stays in state $\mult^0_{1, k}$ and cell $c$.
Agents $a_2$ moves and changes its state according to the following rules, where ``stay'' indicates that the agents does not move to another cell.
\begin{align*}
	(\mult^0_{2, k}, S)         & \to (\mult^1_{2, k}, \south) \,       & \text{for any } S \in 2^Q \text{ satisfying } \mult^0_{1, k} \notin S  \\
	(\mult^0_{2, k}, S)         & \to (\multback^1_{2, k}, \north) \,    & \text{for any } S \in 2^Q \text{ satisfying } \mult^0_{1, k} \in S     \\
	(\mult^j_{2, k}, S)         & \to (\mult^{j+1}_{2, k}, \stay) \,     & \text{for any } 1 \leq j \leq k-3 \text{ and any } S \in 2^Q           \\
	(\mult^{k-2}_{2, k}, S)     & \to (\mult^0_{2, k}, \stay) \,         & \text{for any } S \in 2^Q                                              \\
	(\multback^0_{2, k}, S)     & \to (\multback^1_{2, k}, \north) \,    & \text{ for any } S \in 2^Q \text{ satisfying } \mult^0_{3, k} \notin S \\
	(\multback^0_{2, k}, S)     & \to (\mult^\fin_{2, k}, \stay) \,      & \text{ for any } S \in 2^Q \text{ satisfying } \mult^0_{3, k} \in S    \\
	(\multback^j_{2, k}, S)     & \to (\multback^{j+1}_{2, k}, \stay) \, & \text{for any } 1 \leq j \leq k-3 \text{ and any } S \in 2^Q           \\
	(\multback^{k-2}_{2, k}, S) & \to (\multback^0_{2, k}, \stay) \,     & \text{for any } S \in 2^Q
\end{align*}
For agent $a_3$, the rules are as follows.
\begin{align*}
	(\mult^0_{3, k}, S) & \to (\mult^1_{3, k}, \north) \quad    & \text{for any } S \in 2^Q \text{ satisfying } \multback^0_{2, k} \notin S \\
	(\mult^0_{3, k}, S) & \to (\mult^\fin_{3, k}, \stay) \quad  & \text{ for any } S \in 2^Q \text{ satisfying } \multback^0_{2, k} \in S   \\
	(\mult^j_{3, k}, S) & \to (\mult^{j+1}_{3, k}, \stay) \quad & \text{for any } 1 \leq j \leq k-1 \text{ and any } S \in 2^Q              \\
	(\mult^k_{3, k}, S) & \to (\mult^0_{3, k}, \stay) \quad     & \text{for any } S \in 2^Q
\end{align*}
The protocol terminates when both $a_2$ and $a_3$ are in states $\mult^\fin_{2, k}$ and $\mult^\fin_{3, k}$, respectively.
The design of the protocol (in particular, of the two rules leading to the two terminal states) ensures that $a_2$ and $a_3$ terminate at the same point in time.
As the rules of the protocol specify that $a_2$ walks with speed exactly $1/(k-1)$, and $a_3$ with speed exactly $1/(k+1)$, we see that the first time $a_2$ and $a_3$ are in the same cell in states $\multback^0_{2, k}$, resp.\ $\mult^0_{3, k}$ (which is the configuration leading to termination in the next step), they are in a cell in distance $kX$ from the base of the stack.
The meeting happens after $a_2$ traversed $(k+1) \cdot X$ cells ($X$ towards the base, $kX$ away from the base), whereas $a_3$ traversed $(k-1) \cdot X$ cells.

\paragraph{\divstack{k}}
Analogously, we can implement division by $k$ by letting $a_3$ walk \emph{towards} the base, instead of away from the base, i.e., by replacing the first rule for $a_3$ by
\[
	(\mult^0_{3, k}, S) \to (\mult^1_{3, k}, \south) \quad \text{for any } S \in 2^Q \text{ satisfying } \multback^0_{2, k} \notin S
\]
while leaving all other rules (for all agents) unchanged.
However, the two rules leading to the terminal states require $a_2$ and $a_3$ to be in states $\multback^0_{2, k}$ and $\mult^0_{3, k}$, respectively, to ensure termination.
If the initial stack size $X$ is divisible by $k$, then the states of the two agents will align perfectly in the cell $c''$ in distance $X/k$ from the base of the stack: after $(k-1)(k+1)$ time steps, $a_2$ has traversed $k+1$ cells with speed $1/(k-1)$, and $a_3$ has traversed $k-1$ cells with speed $1/(k+1)$, hence both are in cell $c''$ in the states leading to the terminal states.
If, however, $X$ is not divisible by $k$, then the states of the two agents do not align when they meet again after $a_2$ visited the base.

\paragraph{\isdiv{k}}
Hence, before dividing by $k$, we will always check whether the current stack size is divisible by $k$.
This can be achieved by having $a_2$ walk towards the base with speed $1$ while increasing a counter modulo $k$ each time it takes a step.
If the counter is at $0$ when $a_2$ reaches $a_1$, the stack size is divisible by $k$; if not, then the stack size is not divisible by $k$.
The subroutine of checking for divisibility by $k$ terminates after $a_2$ has walked back to $a_3$ and informed it whether the current stack size is divisible by $k$ or not.

\paragraph{Further Building Blocks.}
In order to be able to write our synchronous exploration protocol concisely, it will be useful to define a few other subroutines.
As before, we will assume that, in the beginning of the subroutines, agents $a_2$ and $a_3$ will be in the same cell $c'$, representing the end of the stack, and $a_1$ is in a cell $c$ representing the base of the stack that differs from $c'$ only in that its coordinate in dimension $1$ is strictly smaller.
The only exception will be the subroutine \initstack{k} that initializes the stack to some positive integer $k$ by having $a_2$ and $a_3$ walk $k$ steps away from $a_1$---here, all three agents are initially in the same cell.
Apart from \initstack{k}, we define the subroutines \incstack{k} for positive integers $k$, and \movestack{g, i}, where $g \in \{-1, 1\}$ and $i \in \{1, \dots, n\}$.
Subroutine \incstack{k} simply increases the stack size by $k$ (additively) by having $a_2$ and $a_3$ walk $k$ steps away from $a_1$.

A subroutine similar to our \movestack{g, i} was already introduced in \cite{Dobrev2019}.
The purpose of this subroutine is to move the whole stack in some direction specified by dimension $i$ and sign $g$.
In our definition, \movestack{g, i} moves every agent to a new cell that differs from the old cell only by having its $i$th coordinate increased by $g$, i.e., effectively each agent takes one step in dimension $i$.
However, one has to be a bit careful when implementing this subroutine as we want to be able to concatenate it with other subroutines.
In particular, in all other subroutines, agent $a_1$ does not know when the subroutine is started or terminates, while the other agents do know.
In order to also obtain this property for \movestack{g, i}, we implement the desired movement by having $a_2$ walk towards $a_1$, notifying it about the desired step and the chosen direction (upon which $a_1$ performs the step) and then returning to $a_3$, where both $a_2$ and $a_3$ perform the desired step as well.

\paragraph{Semi-Synchronous Agents.}
All of the above subroutines can also be performed by (at most) $4$ semi-synchronous agents, as we show in the following.
Similar to the approach in \cite{Dobrev2019}, we will use one agent ($a_4$) to effectively synchronize the behavior of the other agents, which allows us to essentially execute the $3$-agent synchronous subroutines described above with the remaining $3$ agents.
In more detail, agent $a_4$ will visit the other agents in a suitable order, and each of the other agents will only move when they are in the same cell as $a_4$ (while $a_4$ will not leave the cell of the agent it wants to move next until the agent actually left the cell).
We start by showing how this can be achieved for subroutine \multstack{k}.

As in the synchronous version of the subroutine, we would like the two agents $a_2$ and $a_3$ to move with (relative) speeds $1/(k-1)$ (first towards $a_1$ and, after meeting $a_1$, away from $a_1$) and $1/(k+1)$ (away from $a_1$), respectively, while $a_1$ simply stays at the base of the stack.
The purpose of this design---that when $a_2$ and $a_3$ meet next, they are in a cell that has the $k$-fold distance to $a_1$ as they have currently---can also be achieved by having $a_3$ move $k-1$ steps, then having $a_2$ move $k+1$ steps, and so on, always alternating between the two agents, until they are both in the same cell again (which, by their relative ``speeds'' must have the desired distance to the base of the stack).
This behavior can be ensured by using $a_4$:

Agents $a_2$ and $a_3$ follow their designated route, but they only take one step of those routes if they are in the same cell as $a_4$ and $a_4$ is in a state indicating that $a_2$, resp.\ $a_3$ should move (the latter condition is not strictly necessary, but simplifies things by ensuring that $a_2$ and $a_3$ never move at the same time).
Agent $a_4$ alternates between visiting $a_2$ and $a_3$, during each ``visit'' making sure that the respective agent takes the desired number of steps (i.e., $k-1$ or $k+1$).
It does so by going to the cell of the respective agent $a_i$ ($i \in \{2, 3\}$), indicating that $a_i$ should take a step of its route, waiting until $a_i$ takes a step and leaves the cell, incrementing an internal counter, following agent $a_i$ to the next cell, and repeating this behavior until the counter indicates that the desired number of steps has been taken by $a_i$, upon which $a_4$ visits the other agent $a_{5-i}$.
Note that $a_4$ always knows in which direction it has to move to find the desired agent as the coordinates of $a_2$ and $a_3$ only differ in dimension $1$, and $a_2$ always has a smaller (or equally large) first coordinate.
Moreover, $a_4$ also knows in which direction it has to go to follow the agents to the next cell as the only change in direction is performed by $a_2$ and the reason for the change, namely meeting $a_1$, is an information known to $a_4$ since when $a_2$ meets $a_1$, it stays in the cell containing $a_1$ until $a_4$ also arrives there.

In an analogous fashion, we can implement \divstack{k} with $4$ semi-synchronous agents.
For the other four subroutines, the picture is even simpler: it is straightforward to check that these subroutines can already by implemented by $3$ semi-synchronous agents by having the agents perform the same steps as in the respective synchronous subroutines.
The reason that these subroutines also work in the semi-synchronous setting is that either the synchronous version already contain one agent that effectively acts as a synchronizer (in the sense that every action is performed by that agent or directly instigated by a visit of that agent), as in \isdiv{k} and \movestack{g, i}, or the actions of the agents are independent of each other, as in \initstack{k} and \incstack{k}.
For these four subroutines, we will simply assume that $a_4$ is treated the same as $a_3$; in particular, at the beginning and end of each subroutine, we will always have $a_2$, $a_3$, and $a_4$ in the same cell, indicating the end of the stack.

We have to be a bit careful with the termination of each subroutine as we want to be able to concatenate the subroutines without problems.
To this end, we will again use $a_4$ as a synchronizer: before terminating itself, $a_4$ will wait that $a_2$ and $a_3$ (which are in the same cell at the end of each subroutine) have terminated.
Similarly, we can assume that $a_4$ will initialize the next subroutine by changing its state suitably, thereby making sure that the start and end of the subroutines align across all agents.
A last detail is that in the semi-synchronous version of \movestack{g, i} (which is the only subroutine where $a_1$ moves), after meeting $a_1$, agent $a_2$ has to wait until $a_1$ takes its step before moving back to $a_3$ and $a_4$, in order to make sure that $a_4$ does not terminate and initialize the next subroutine before $a_1$ takes its step.

\section{The Exploration Protocol}\label{sec:protocol}

In this section, we will combine the building blocks of Section~\ref{sec:blocks} to a protocol that allows $3$ synchronous, resp.\ $4$ semi-synchronous, agents to explore the $n$-dimensional (oriented) grid, and prove the protocol's viability.
Our protocol is given by algorithm \exsync{}.


\begin{algorithm}
	\caption{\exsync}
	\label{alg:exsync}
	\begin{algorithmic}[1]
		\State \initstack{3}
		\StartRepeat
		\For{each function $g:\{1, \dots, n\} \to \{-1, 1\}$}
		\State \followroute{g}
		\State \followroute{-g}
		\EndFor
		\State \incstack{2}
		\EndRepeat
		\item[]
		\Procedure{FollowRoute}{$g$}
		\For{$i = 1$ to $n$}
		\While{\isdiv{p_i}}
		\State \divstack{p_i}
		\State \multstack{2}
		\State \movestack{g(i), i}
		\EndWhile
		\While{\isdiv{2}}
		\State \divstack{2}
		\State \multstack{p_i}
		\EndWhile
		\EndFor
		\EndProcedure
	\end{algorithmic}
\end{algorithm}

The underlying idea of algorithm \exsync{} is the same as in the algorithms from \cite{Dobrev2019}:
We generate each (non-zero) $n$-dimensional vector $(v_1, \dots, v_n)$ with non-negative integer coordinates, and for each such vector, we let one agent walk from the origin to each cell $(c_1, \dots, c_n)$ such that $c_i \in \{v_i, -v_i\}$ for all $1 \leq i \leq n$, and then back to the origin.
More precisely, in each execution of \followroute{g}, agent $a_1$ walks to the respectively specified cell $(c_1, \dots, c_n)$, and in each execution of \followroute{-g}, $a_1$ walks back to the origin.
To generate $(v_1, \dots, v_n)$, a counter, represented by the stack size, is used that is incremented gradually, thereby iterating through the positive integers.
Each time the counter is incremented, the new value $X$ will be transformed into some $n$-dimensional vector, $(v_1, \dots, v_n)$, where the design of the transformation has to make sure that every (non-zero) vector with non-negative integer coordinates is generated by some value $X$.
For technical reasons we require the stack size to be odd, so whenever we increase the counter, we will increase it by $2$, while still ensuring that all vectors are generated.

However, as we have one fewer agent available than in the protocols in \cite{Dobrev2019}, our protocol requires a new way to implement this idea.
In particular, we avoid using one separate agent to remember the stack size when the stack is read, instead making sure that even after the stack is read, no information about the previous stack size(s) is lost\footnote{Note that such information is still required after reading the stack: we will need it both to guide $a_1$ back to the origin and to retrieve the counter value $X$ that we want to increase repeatedly.}.
To this end, we define the vector $(v_1, \dots, v_n)$ we want to transform $X$ into as follows.
Let $p_1, \dots, p_n$ denote the first $n$ odd primes, where $p_1 < \dots < p_n$.
For all $1 \leq i \leq n$, we define $v_i$ to be the largest non-negative integer such that $p_i^{v_i}$ divides $X$.
In other words, $v_i$ represents how often $p_i$ occurs as a prime factor of $X$.

Consider procedure \followroute{g}.
The for loop of this procedure iterates through the $n$ dimensions.
For each dimension $i$, the first while loop repeatedly replaces one prime factor $p_i$ by prime factor $2$, by dividing by $p_i$ and multiplying by $2$.
Each time such a replacement is performed, the whole stack is moved one cell w.r.t.\ dimension $i$ (either increasing or decreasing the respective coordinate by $1$, depending on the value of $g(i)$).
After all (i.e., $v_i$) occurrences of $p_i$ as prime factors have been replaced by factors $2$, the stack manipulations are reversed in the second while loop, resulting in the original stack size $X$.
Note that in the very beginning of algorithm \exsync{}, the stack size is initialized to $3$, and each time the counter represented by the stack size is increased, it is increased by $2$; hence, before starting the first while loop, the stack size is odd, ensuring that the second while loop goes through exactly the same number of iterations as the first one.
Note further that we do not revert the steps that $a_1$ took (yet) when reversing the stack manipulations.
After iterating through all dimensions, agent $a_1$ is now in cell $(c_1, \dots, c_n)$, and we can consider this cell as explored, concluding the execution of \followroute{g}.

The execution of \followroute{-g} is identical to the execution of \followroute{g}, except that each step of $a_1$ is performed in the opposite direction.
Hence, at the end of the execution of \followroute{-g}, agent $a_1$ is back at the origin, while the stack size is (again) $X$.
The (outer) for loop in algorithm \exsync{} simply iterates through all possible assignments of signs $\in \{ -1, +1 \}$ to the dimensions, making sure that for each generated vector $(v_1, \dots v_n)$, each corresponding cell $(v'_1, \dots, v'_n)$ is explored.


\oriented*
\begin{proof}
  First, we observe that the design of \exsync{} ensures that each grid cell will be explored at some point in time since each (non-zero) cell $(c_1, \dots, c_n)$ is explored in the iteration where the counter value is $\prod_{i=1}^n p_i^{|c_i|}$.
	Note that each such counter value is odd (and at least $3$) and thus indeed reached by starting with a counter value of $3$ and repeatedly increasing it by $2$.
	Hence to prove the theorem, it suffices to show that \exsync{} can be executed with $3$ synchronous finite automata, resp.\  $4$ semi-synchronous finite automata, which we do in the following.

	For any fixed $n$, there is only a finite number of \emph{distinct} subroutines that we execute in \exsync{}.
	Each such subroutine can be executed by $3$ synchronous agents, resp.\  $4$ semi-synchronous agents, due to the implementations provided in Section~\ref{sec:blocks}.
	Moreover, agents $a_2$ and $a_3$ (or, in the case of $4$ semi-synchronous agents, agent $a_4$) know when a subroutine has terminated and can start the next one, while agent $a_1$ only moves or changes its state in the subroutines upon a visit by another agent, and finishes its actions before the termination of the subroutine.
	Hence, the theorem follows if we can show that the choice of each subsequent subroutine by $a_2$ and $a_3$, resp.\ $a_4$, can be implemented using a finite automaton.
	However, this follows from the following observations.

	If the currently executed subroutine is not \isdiv{p_i} or \isdiv{2}, then the subroutine executed next according to \exsync{} is uniquely defined by the tuple consisting of the current subroutine (type), the current function $g$, the current value of $i$, the information whether we are in procedure \followroute{g} or in procedure \followroute{-g} (or neither), and the information whether we are in the first or the second while loop of the respective procedure.
	If the current subroutine is \isdiv{p_i} or \isdiv{2}, then, in order to determine the next subroutine, we additionally need the answer to the question whether the current stack size is divisible by $p_i$ or $2$, respectively.
	As there are only finitely many such tuples and the answer to the divisibility question adds just one bit of information, determining each subsequent subroutine can be performed by a finite automaton.
\end{proof}

\section{Unoriented Grids}\label{sec:unogrids}
In \cite{Dobrev2019}, the authors showed that any protocol for the oriented grid can be transformed into a protocol for unoriented grids by adding sufficiently many agents such that, at all times, each original agent moving across a non-constant distance is accompanied by one of the additional agents.
In particular, for both their protocol and our improved protocol, this implies that $2$ additional agents are required in the synchronous case and $1$ additional agent in the semi-synchronous case (since in the protocols for the oriented grid, $2$ synchronous agents are traversing non-constant distances at the same time, while in the semi-synchronous case only $1$ agent does so).
Hence, our protocol for the oriented grid improves also the state of the art for the minimum number of required agents on \emph{unoriented} grids from $6$ to $5$ (in both the synchronous and the semi-synchronous setting).

On an informal level, it seems unlikely that our upper bound of $5$ can be improved since intuitively, as Dobrev et al.~\cite[Section 7]{Dobrevarxiv} write, ``a lone agent cannot cross any non-constant distance, as the irregular nature of the port labels would lead it astray, never to meet any other agent''.
The tightness of our bound on the oriented grid combined with the perceived necessity of having moving agents accompanied by a partner seems to indicate that we cannot do better.
However, there is no formal proof of any lower bound beyond the synchronous $3$-agent and semi-synchronous $4$-agent lower bounds~\cite{Brandt2018, Emek15} that carry over from the case of the oriented grid.
Admittedly, as such a formal lower bound might require us to find a ``bad'' input instance (i.e., a bad input edge labelings of the infinite $n$-dimensional grid) for every potential protocol with more than $3$ synchronous, resp.\ $4$ semi-synchronous, agents, it is not particularly surprising that we do not have better lower bounds---yet, making at least some progress would be desirable.

In this section, we will show that under a natural assumption the current lower bounds are actually optimal by providing tight upper bounds.
Our assumption states that you cannot walk in a cycle if you always follow the same direction, or, more formally:
\begin{assumption}\label{ass:cycle}
	Let $\ell \in \{ 1, \dots, 2n \}$ be any port, $z$ any positive integer, and $c_0, \dots, c_z$ any sequence of cells such that, for each $0 \leq j \leq z-1$, we reach cell $c_{j+1}$ by leaving cell $c_j$ via port $\ell$.
	Then $c_0 \neq c_z$.
\end{assumption}
Surprisingly, this assumption does not contradict the intuition about agents traveling alone discussed above, yet it still allows us to prove upper bounds for unoriented grids matching the lower bounds obtained on \emph{oriented} grids.
While our upper bounds answer the question for the minimally required number of agents in a natural\footnote{After all, it seems like a reasonable minimal requirement for a sense of direction that if you go north (or in any other direction) repeatedly, then you do not return to the starting point.} setting very close to truly unoriented grids, we think that they also constitute a useful step on the way to a lower bound construction for the general unoriented setting (assuming that the current lower bounds are not optimal): any such construction necessarily has to contain cycles that violate Assumption~\ref{ass:cycle}.

\paragraph{Our Approach.}
The general idea behind our approach is to find a way to construct a stack also for unoriented grids.
The natural idea of simply selecting one port $\ell$ and interpreting the sequence of cells obtained by successively leaving cells via port $\ell$ as the stack does not work: while it is easy for an agent to traverse the stack in the direction away from the base (it just has to leave each cell via port $\ell$), traversing the stack in the opposite direction runs into the problem that there might be different neighboring cells from which the current cell can be reached via port $\ell$ and the traversing agent cannot know which is the one that belongs to the intended stack.
Instead, we will build the desired virtual stack by constructing an auxiliary (infinite) directed labeled forest and then traversing (a part of) the forest from some starting cell in a DFS-like fashion, which will ensure that agents can traverse the stack in both directions.
In particular, the same cell can occur in the stack several times; to distinguish the occurrences (and make it possible for an agent to traverse the virtual stack), our stack will formally consist of pairs (cell, integer), where the integers come from the set $\{1, \dots, 2n\}$.
For an illustration of the auxiliary graph and the virtual stack, we refer to Figures~\ref{fig:grid} and~\ref{fig:tree}.

%
%
%

\begin{figure}
	\centering
	\includegraphics[scale=1.7]{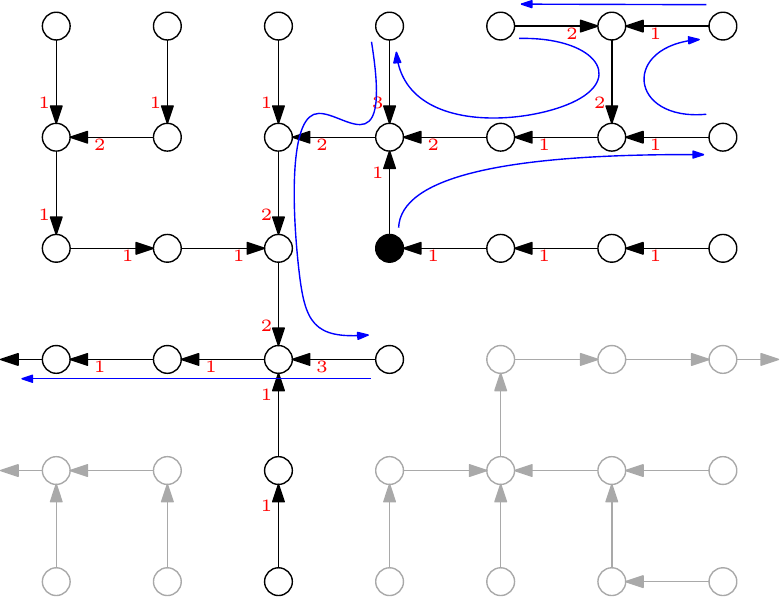}
	\caption{Figure~\ref{fig:grid} depicts a part of a possible auxiliary graph $G$ for a $2$-dimensional unoriented grid and the respective virtual stack. Vertices, i.e., cells, are represented by circles, and directed edges by arrows. The parts grayed out belong to different trees than the one containing the cell where $a_1$ is located (colored black). The edges are labeled with their respective levels. The physical cells of the virtual stack rooted in the black cell (i.e., the first component of the pairs the virtual stack consist of) are indicated by the route that starts in the black cell and follows the blue arrows. Each further step on this route leads to the physical cell corresponding to the next higher position in the stack, where the black cell indicates position $0$. For each occurrence of a cell on this route, the corresponding level (i.e., the second component of the pairs the virtual stack consists of) is $1$ if the current cell is a child of the previous cell, and equal to the level of the arrow traversed last plus $1$ if the current cell is the parent of the previous cell. When the blue route goes from a cell to its parent, then the edge traversed next will have a level that is higher by $1$ than the previously traversed edge; when the route goes from a cell to one of its children, then the edge traversed next has level $1$. This leads to a virtual stack that corresponds to a part of a DFS exploration on the (infinite) tree containing the black cell, as can be seen in Figure~\ref{fig:tree}, where the same route on the same auxiliary graph is depicted as a rooted tree.}\label{fig:grid}
\end{figure}

\begin{figure}[t]
	\centering
	\includegraphics[scale=1.35]{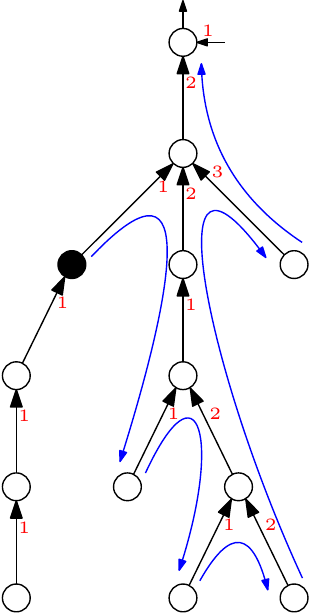}
	\caption{The same virtual stack as in Figure~\ref{fig:grid}, depicted as a rooted tree.}\label{fig:tree}
\end{figure}

\paragraph{The Auxiliary Graph.}
We start by defining our auxiliary graph $G=(V, E)$.
The vertices of $G$ are the cells of our grid, and we have a directed edge $(c, c')$ between two cells $c, c'$ if $c$ and $c'$ are neighbors in the grid and cell $c'$ is reached by leaving cell $c$ via port $1$.\footnote{The choice of port $1$ here is arbitrary; choosing any other label from $\{1, \dots, 2n\}$ works equally well. }
In particular, this implies that each cell $c$ has exactly one outgoing edge in $G$; we call the cell $c'$ reached by traversing this edge the \emph{parent of $c$}, and $c$ a \emph{child of $c'$}.
Note that Assumption~\ref{ass:cycle} ensures that $G$ does not contain cycles, and hence, is an infinite forest.
In particular, for any two neighboring cells $c, c'$, at most one of the two possible edges $(c, c')$ and $(c', c)$ is present in $E$.

Let $\indeg{c}$ denote the \emph{indegree} of a cell $c$, i.e., the number of edges from $E$ incoming to $c$.
We assign to each edge $e = (c, c')$ a level $L(e)$ as follows.
For each cell $c'$, we order the incoming edges $(c, c')$ increasingly by the corresponding port of $c'$, and then assign (distinct) \emph{levels} from $1$ to $\indeg{c'}$ to the edges according to this order.
For instance, if $c'$ has two incoming edges $(c, c')$ and $(c'', c')$, corresponding to ports $5$ and $3$ of $c'$, respectively, then the order of the edges will be $(c'', c')$, $(c, c')$, and we will assign level $1$ to $(c'', c')$, and level $2 = \indeg{c'}$ to $(c, c')$.

\paragraph{The Virtual Stack.}
Using the auxiliary graph $G$, we now define, for each cell $c$, the \emph{virtual stack $\virt_c$ rooted in $c$} as follows.
Recall that $\virt_c$ consists of pairs (cell, integer).
We will use the functions $\cell{\cdot}$ and $\level{\cdot}$ to retrieve the first, resp.\ second, component of such a pair.
The base of the stack is defined as $\virt_c[0] := (c, \indeg{c} + 1)$.
For each integer $j \geq 1$, we inductively define $\virt_c[j]$ according to the following case distinction.
\begin{itemize}
	\item If $\level{\virt_c[j-1]} = \indeg{\cell{\virt_c[j-1]}} + 1$, then	
	\begin{itemize}
		\item $\cell{\virt_c[j]}$ is defined as the parent of $\cell{\virt_c[j-1]}$, and
		\item $\level{\virt_c[j]} := L((\cell{\virt_c[j-1]}, \cell{\virt_c[j]})) + 1$.
	\end{itemize}
	\item If $\level{\virt_c[j-1]} \leq \indeg{\cell{\virt_c[j-1]}}$, then
	\begin{itemize}
		\item $\cell{\virt_c[j]}$ is defined as the child of $\cell{\virt_c[j-1]}$ that is connected to \newline $\cell{\virt_c[j-1]}$ via an (outgoing) edge of level $\level{\virt_c[j-1]}$, and
		\item $\level{\virt_c[j]} := 1$.
	\end{itemize}
\end{itemize}
In other words, we inductively build the virtual stack rooted in $c$ as follows.
We start in $c$ and leave $c$ via the unique outgoing edge.
Each time we enter a cell $c'$ via an incoming edge, i.e., coming from a child $c''$, the next cell we visit is the next higher child of $c'$, i.e., the child that is connected to $c'$ via an edge of level $L(c'', c') + 1$.
If no higher child remains, i.e., if $(c'', c')$ has level $\indeg{c'}$, then the next cell we visit is the parent of $c'$.
Each time we enter a cell $c'$ from its parent, the next cell we visit is the first child of $c'$, i.e., the child that is connected to $c'$ via an edge of level $1$.
Hence, our stack corresponds to a DFS exploration of the tree in forest $G$ containing $c$, where we assume that the part of the DFS that is executed before traversing the edge from $c$ to its parent has already happened.
As the tree is infinite, we may not reach every cell contained in the tree in finite time, but to use such a DFS exploration as a stack, this is not relevant.
What is relevant, however, is that once we traverse an edge from a child to its parent, the DFS will never return to the child in finite time as Assumption~\ref{ass:cycle} ensures that the parent chain starting from $c$ (and therefore also any parent chain starting from any other cell visited by the partial DFS) is infinite.
Combining this fact with the cyclic fashion in which each visited cell iterates through its children and parent to determine the neighbor visited next, we obtain the following observation.

\begin{observation}\label{obs:norepeat}
	Fix an arbitrary cell $c$.
	For any two non-negative integers $i \neq j$, we have $\virt_c[i] \neq \virt_c[j]$.
\end{observation}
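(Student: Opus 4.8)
The plan is to argue that the virtual stack $\virt_c$ corresponds precisely to the sequence of (node, "entry mode") pairs visited by a DFS traversal of the (infinite) tree $T$ in $G$ containing $c$, started after the partial DFS below $c$ has finished, and that such a DFS never revisits a configuration. The key structural fact to exploit is that Assumption~\ref{ass:cycle} forces every parent chain in $G$ to be infinite, so $T$ is a tree in which $c$ has an infinite chain of ancestors, and the DFS, once it ascends from a node to its parent, can never descend back into that node in finite time (the subtree it would have to re-enter is "below" and already explored, while the frontier keeps moving upward along the infinite ancestor chain).

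First I would set up the correspondence carefully: for each $j$, the pair $\virt_c[j] = (\cell{\virt_c[j-1]}\text{'s neighbor}, \text{level})$ records both the physical cell and enough information (the level) to know, within the DFS, whether we arrived from the parent (level $1$) or from a child via an edge of level $L$ (encoded as $L+1$), and in the latter case exactly which child. So $\level{\virt_c[j]}$ is a faithful record of "how we entered $\cell{\virt_c[j]}$". Then I would observe that the update rule is exactly the DFS rule: on entering a cell from below via level $L$, go to the child of level $L+1$ if it exists, else ascend to the parent; on entering from the parent, go to the child of level $1$. The base case $\virt_c[0] = (c, \indeg{c}+1)$ encodes "we are at $c$ having just finished exploring all of $c$'s children", i.e., about to ascend — consistent with the stated assumption that the DFS below $c$ already happened.

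Next I would prove Observation~\ref{obs:norepeat} by contradiction. Suppose $\virt_c[i] = \virt_c[j]$ with $i < j$. Since $\virt_c[j]$ is determined by $\virt_c[j-1]$ via a deterministic rule, and similarly $\virt_c[i]$ by $\virt_c[i-1]$, a repeat at positions $i, j$ would force periodicity of the whole sequence with period $j - i$ from position $i$ onward (and in fact the rule is also "reversible enough" to push periodicity backward). A periodic orbit means the DFS cycles through finitely many cells forever, which in particular means it never ascends above a fixed height in $T$. But the DFS does ascend: consider the position $k$ in $[i, j)$ at which the level-component is largest; by the update rules, at a node $c'$ with $\level{\virt_c[k]} = \indeg{c'}+1$ the DFS goes to the parent, and crucially the parent chain from $c$ is infinite by Assumption~\ref{ass:cycle}, so each ascent is to a strictly "new" ancestor that lies above everything previously visited — contradicting that only finitely many cells appear. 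I would make the "above/below" bookkeeping rigorous by assigning to each visited cell its height relative to $c$ along parent edges (well-defined on the single tree $T$), noting the DFS visits cells of bounded height below $c$'s root-side only transiently, and the sequence of heights is unbounded.

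The main obstacle I expect is making precise the claim that the DFS "never returns to a child once it ascends to the parent" in the infinite-tree setting, since unlike finite DFS there is no notion of the traversal terminating. The right way around this is to avoid talking about full exploration and instead argue locally: once edge $(c'', c')$ of level $\ell$ is traversed upward (i.e., $\virt_c[k] = (c', L(c'',c')+1)$ with $L(c'',c') = \indeg{c'}$), every subsequent cell in the stack is an ancestor of $c'$ or lies in a subtree hanging off a strict ancestor of $c'$ through a branch not containing $c'$ — never inside the subtree rooted at $c''$ or $c'$ again — because the update rule only moves to the parent or to a not-yet-visited higher-indexed child, and the infinite ancestor chain (Assumption~\ref{ass:cycle}) guarantees the "go to parent" branch is always available and leads strictly outward. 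Formalizing this invariant by induction on $j$, with the height function as the potential that only increases on ancestor moves and is reset to a strictly deeper value only when entering a genuinely new child, gives the contradiction and hence the observation.
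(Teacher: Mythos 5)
Your overall route---identify $\virt_c$ with a DFS/Euler-tour of the infinite tree containing $c$, then use the determinism of the update rule to turn a hypothetical repeat $\virt_c[i]=\virt_c[j]$ into a periodic orbit confined to finitely many cells---is the same idea the paper itself relies on (the paper only argues this informally before stating the observation), and the periodicity reduction is a legitimate way to formalize it. However, the sentence that is supposed to deliver the contradiction does not hold as stated: it is not true that ``each ascent is to a strictly new ancestor that lies above everything previously visited.'' When the walk ascends from a child $c''$ to its parent $c'$, the cell $c'$ has in general been visited before (the walk passed through $c'$ on its way down into $c''$), and after the ascent the walk may immediately descend again into the next child of $c'$. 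So the mere existence of parent-moves does not show that the sequence of heights is unbounded, and the periodic orbit is not yet refuted; likewise your invariant ``never inside the subtree rooted at $c''$ or $c'$ again'' is false for $c'$ in general and, even in the special case $L((c'',c'))=\indeg{c'}$ that you restrict to, it presupposes the very non-return property being proved.

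The missing step is a monotonicity argument at a single well-chosen cell. If the orbit is periodic from index $i$ on, the cells visited from then on form a finite set $S$ spanning a connected subtree of the (infinite) tree containing $c$, and since every cell has a parent and, by Assumption~\ref{ass:cycle}, the parent relation is acyclic, $S$ has a unique topmost element $c^*$ whose parent lies outside $S$. Within the periodic part, every departure from $c^*$ goes to a child (its parent is not in $S$), every return to $c^*$ must cross back over the same child edge (tree structure), and by the update rule such a return sets the level at $c^*$ to one more than it was at the preceding departure. Hence the level recorded at successive visits to $c^*$ is strictly increasing, so no pair $(c^*,\level{\cdot})$ ever recurs---already contradicting periodicity---and after at most $\indeg{c^*}$ excursions the level reaches $\indeg{c^*}+1$, forcing a move to the parent of $c^*$ outside $S$. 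With this lemma in place your argument closes; note that Assumption~\ref{ass:cycle} enters precisely here, guaranteeing that the ascent from $c^*$ escapes $S$ rather than wrapping around a cycle.
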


In order to make use of the defined virtual stack, we need the agents to be able to represent their position in the stack in some way.
However, given the specific design of the virtual stack, this is not difficult: each agent allocates a part of its state to keep track of the level $\level{\virt_c[j]}$ of the current position $\virt_c[j]$ in the stack, while the first component $\cell{\virt_c[j]}$ of the current position in the stack is simply represented by the cell the agent currently occupies.
An advantage of this design is that each agent $a_i$ can determine which other agents are in the same stack position as $a_i$, and which are not (despite possibly being in the same physical cell).
In other words, each agent has all the necessary information to evaluate its transition function, even for moving on the virtual stack.

However, there is one piece still missing for using the virtual stack similar to a physical stack: we have to show that even a lone agent can traverse the virtual stack in either direction, i.e., that a finite automaton is sufficient to determine the physical cell that corresponds to the previous, resp.\ subsequent, position in the virtual stack, and similarly, to determine the level of that stack position.
The following lemma takes care of this.

\begin{lemma}\label{lem:virtualmove}
	There is a finite automaton that, when located in cell $\cell{\virt_c[j]}$ in state $(i, \level{\virt_c[j]})$, where $c$ is an arbitrary cell, $i \in \{-1, 1\}$, and $j \geq 1$ an arbitrary integer, moves to cell $\cell{\virt_c[j+i]}$ and changes its state to $(i, \level{\virt_c[j+i]})$ in $2$ time steps.
\end{lemma}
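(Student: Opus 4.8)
The plan is to exhibit, separately, a finite automaton realizing one forward step ($i=+1$) and one realizing one backward step ($i=-1$) along the virtual stack, each using at most two physical moves, and to observe that both can be decided with only the port labels visible at the current cell (which the agent can see by the definition of the model for unoriented grids). I would first recall the two cases in the definition of $\virt_c$. If we are at $\virt_c[j]$ with $\level{\virt_c[j]} \le \indeg{\cell{\virt_c[j]}}$, the forward step goes to the child connected via the outgoing edge of level $\level{\virt_c[j]}$; if $\level{\virt_c[j]} = \indeg{\cell{\virt_c[j]}}+1$, the forward step goes to the parent and the new level is $L(\cdot)+1$ of the edge just traversed. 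The key point is that the automaton at $\cell{\virt_c[j]}$ knows $\level{\virt_c[j]}$ from its state, and it can compute $\indeg{\cell{\virt_c[j]}}$ and the level of any incident edge purely from the port labels it sees: an edge $(c,c')$ is present iff $c'$ is reached from $c$ via port $1$ (equivalently, iff $c$ is reached from $c'$ via the port of $c'$ whose partner label on that edge is $1$), and the level of such an incoming edge at $c'$ is its rank among incoming edges ordered by the port of $c'$ — all locally visible data.

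For the forward step in the child case, the automaton identifies the unique incident edge that is outgoing from $\cell{\virt_c[j]}$ and whose level (as an incoming edge at the neighbor) equals $\level{\virt_c[j]}$, leaves via the corresponding port in one move, and sets its new level to $1$; this takes one move (we may pad with a stay-move to make it exactly two). For the forward step in the parent case, the automaton leaves via port $1$, arriving at the parent, and updates its level to $L(e)+1$ where $e$ is the edge just traversed; by the argument above it can read off $L(e)$ at the parent cell (it is the rank of the incoming port of the parent it just used) — again one move, padded to two. For the backward step ($i=-1$) we invert each case: if $\level{\virt_c[j]} = 1$, the predecessor $\virt_c[j-1]$ is the parent of $\cell{\virt_c[j]}$ in the stack sense, reached by leaving via port $1$, and its level is $\indeg{\cell{\virt_c[j]}}+1$ if $\cell{\virt_c[j]}$ is the highest child, or the level-of-incoming-edge minus... — more precisely, $\virt_c[j-1]$ is the cell whose outgoing edge of some level $\ell'$ points to $\cell{\virt_c[j]}$, and $\level{\virt_c[j-1]} = \ell'$; the automaton leaves $\cell{\virt_c[j]}$ via port $1$ to reach that cell and reads $\ell'$ as the level of the edge just traversed. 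If $\level{\virt_c[j]} \ge 2$, then $\virt_c[j-1]$ is a child of $\cell{\virt_c[j]}$, namely the one reached by the outgoing edge of level $\level{\virt_c[j]} - 1$, and $\level{\virt_c[j-1]} = \indeg{\text{that child}} + 1$; the automaton leaves via the corresponding port and reads the child's indegree locally. Each backward step is one physical move, padded to two; the reason we allow two moves in the statement is precisely to have a uniform bound covering the occasional need to orient before moving, though in fact one move suffices in every case above.

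The main obstacle, and the part requiring care, is verifying that all the quantities the automaton must consult — whether a given incident edge belongs to $E$, the level of an incoming edge, the indegree of the current or a neighboring cell — are genuinely computable from the information the model grants the agent at its current cell, namely its own state, the states of co-located agents, and for each of its $2n$ ports the label appearing on the other endpoint of that edge. Presence of $(c,c')$ in $E$ is visible because it is equivalent to the other-endpoint label of some port of $c$ being... — wait, more carefully: $(c,c')\in E$ iff leaving $c$ via port $1$ lands in $c'$, which $c$ sees directly (it is just the port-$1$ neighbor); and $(c',c)\in E$, i.e.\ $c$ being a child of $c'$ via the port-$p$ edge, holds iff the other-endpoint label on $c$'s port-$p$ edge equals $1$. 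Hence at $c$ the agent can classify each of its ports as "outgoing" (the port-$1$ port), "incoming" (ports whose far label is $1$), or "irrelevant", can count the incoming ones to get $\indeg{c}$, and can sort the incoming ones by the far label to recover each edge's level $L(\cdot)$. For a neighbor's indegree the agent simply steps there and performs the same local count (the second of the two allowed moves absorbs this when needed). With these observations the case analysis above closes, and since the number of states used is $2\cdot(\indeg{\cdot}\text{-range}) = O(n)$ — a constant for fixed $n$ — the automaton is finite, completing the proof. $\qed$
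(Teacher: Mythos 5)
Your overall strategy coincides with the paper's: a case analysis on the inductive definition of $\virt_c$, the observation that edge orientations, levels, and indegrees in the auxiliary graph are all recoverable from the port labels visible at a cell, and the use of the second time step to read, at the destination cell, the information needed to set the new level. Those are exactly the ingredients of the paper's proof, and your closing paragraph states the essential local-computability facts correctly (a port of $c$ is outgoing iff it is port $1$, incoming iff its far label is $1$, and the indegree is the count of the latter).

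However, the forward step in the ``descend to a child'' case, as you literally describe it, would fail. When $\level{\virt_c[j]} \leq \indeg{\cell{\virt_c[j]}}$, the edge to be traversed is \emph{incoming} at the current cell $c' = \cell{\virt_c[j]}$ (it is the child's port-$1$ edge, i.e., a port of $c'$ whose far label is $1$), and its level is its rank among the incoming edges \emph{at $c'$}, ordered by the ports of $c'$. You instead instruct the automaton to pick ``the unique incident edge that is outgoing from $\cell{\virt_c[j]}$ \dots whose level (as an incoming edge at the neighbor) equals $\level{\virt_c[j]}$'': every cell has exactly one outgoing edge (its port-$1$ edge, leading to its parent), so this rule either selects the parent edge or selects nothing---it never reaches the intended child. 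The same reversal recurs when you propose to ``sort the incoming ones by the far label'': all incoming edges have far label $1$ by definition, so the sort key must be the port of the current cell, as you in fact state one sentence earlier. Finally, your aside that one move suffices in every case is false precisely in the cases that motivate the two-step bound: the level $L(e)$ of the traversed edge (needed whenever the move goes to the parent, i.e., in the forward parent-case and the backward level-$1$ case) and the destination's indegree (needed in the backward case with level at least $2$) are only visible after arriving, so the state update requires a second look-compute-move cycle---which is exactly how the paper spends the second time step. None of this requires a new idea to repair; swapping ``incoming'' and ``outgoing'' in the offending clauses recovers the paper's argument, but the automaton you literally describe does not traverse the virtual stack correctly.
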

\begin{proof}
	Let $c' = \cell{\virt_c[j]}$ denote the cell in which the finite automaton is located.
	By the definition of our auxiliary graph $G$, the input labels on the grid edges incident on $c'$ (which are part of the information available to the finite automaton) uniquely determine the edges incident to $c'$ in $G$, their orientations (hence, also $\indeg{c'}$), and, if they are incoming edges, their levels.
	Denote the entirety of this uniquely determined information about the edges incident on $c'$ by $\mathcal I_{c'}$.
	By the definition of $\virt_c$, which of the incident edges leads to $\cell{\virt_c[j+1]}$ is uniquely determined by $\level{\virt_c[j]}$ and $\mathcal I_{c'}$.
	Moreover, $\level{\virt_c[j+1]}$ is uniquely determined by $\level{\virt_c[j]}$, $\mathcal I_{c'}$, and $\mathcal I_{c''}$, where $c'' = \cell{\virt_c[j+1]}$.
	Hence, if $i = 1$, then there is a finite automaton that first moves to $\cell{\virt_c[j+1]}$ and then, (potentially) using input information obtained in the new cell, changes its state to $(i, \level{\virt_c[j+i]})$.

	For the case $i = -1$, we observe that, according to the definition of $\virt_c$, the following hold.
	\begin{itemize}
		\item If $\level{\virt_c[j]} = 1$, then
		\begin{itemize}
			\item $\cell{\virt_c[j-1]}$ is the parent of $\cell{\virt_c[j]}$, and
			\item $\level{\virt_c[j-1]} = L(\cell{\virt_c[j]}, \cell{\virt_c[j-1]})$.
		\end{itemize}
		\item If $\level{\virt_c[j]} > 1$, then
		\begin{itemize}
			\item $\cell{\virt_c[j-1]}$ is the child of $\cell{\virt_c[j]}$ that is connected to $\cell{\virt_c[j]}$ via an edge of level $\level{\virt_c[j]} - 1$, and
			\item $\level{\virt_c[j-1]} = \indeg{\cell{\virt_c[j-1]}} + 1$.
		\end{itemize}
	\end{itemize}
	It follows that, similarly to before, which of the edges that are incident to $\cell{\virt_c[j]}$ leads to $\cell{\virt_c[j-1]}$ is uniquely determined by $\level{\virt_c[j]}$ and $\mathcal I_{c'}$, and $\level{\virt_c[j-1]}$ is uniquely determined by $\level{\virt_c[j]}$, $\mathcal I_{c'}$, and $\mathcal I_{c'''}$, where $c''' = \cell{\virt_c[j-1]}$.
	Now we obtain the lemma analogously to the case $i = 1$.
\end{proof}

Note that when applying Lemma~\ref{lem:virtualmove}, the finite automaton from Lemma~\ref{lem:virtualmove} will only constitute a part of the finite automaton governing our agents in the final protocol for the unoriented case.

In order to explore unoriented grids, we will rely heavily on our protocol for oriented grids.
While the use of a virtual stack will take care of some of the difficulties resulting from the missing global consistency of the provided input edge labeling, there is an additional obstacle for using \exsync{} to explore unoriented grids:
In \exsync{}, each cell is reached by a route that starts in the origin, proceeds along dimension $1$, then proceeds along dimension $2$, and so on, until we have exhausted all dimensions.
On unoriented grids, the natural analogon would be to use the \emph{local} orientations available to us in the same manner, i.e., starting in the origin we first leave cells via port $1$, then we switch to port $2$, and so on.
However, due to the fact that there is no global consistency guarantee for those local orientations, it might be that some cells cannot be reached via such a route from the origin.
To overcome this obstacle we will make sure that $a_1$ does not iterate through all possible \emph{ports} in the manner described above, but instead through all \emph{globally consistent dimensions} as in the oriented case, i.e., for the route that takes $a_1$ to the cell to be explored next, the input edge labels will be essentially irrelevant.

The underlying idea to make this approach work in the unoriented setting is to use the \emph{handrail technique}, developed for $2$-dimensional grids by Mans \cite{mans1997optimal} and generalized to higher dimensions in \cite{Dobrevarxiv}.
This technique ensures the following:
Assume that agent $a_1$ in some cell $c'$ is aware of a bijection $f$ from the set $\{1, \dots, 2n\}$ of the ports of $c'$ to the set $\{-1, +1\} \times \{1, \dots, n\}$ such that, for any $i \in \{1, \dots, n\}$, the two ports that are mapped to $(-1, i)$ and $(+1, i)$ lead to opposite neighbors of $c'$, i.e., to two cells that only differ in one coordinate (by $2$). Then, if there is another agent $a'$ in the same cell $c'$, there is a $2$-agent protocol (both in the synchronous and in the semi-synchronous setting) that moves both agents to any adjacent cell $c''$ of their choice and provides $a_1$ with a new bijection $f'$ with the same properties as $f$ such that the following holds for any $j \in \{-1, +1\}$ and any $i \in \{1, \dots, n\}$: leaving $c'$ via the port that $f$ maps to $(j, i)$ leads in the same global direction (i.e., along the same dimension with the same sign) as leaving $c''$ via the port that $f'$ maps to $(j, i)$.
In other words, once agent $a_1$ is aware of the different dimensions and has chosen a name for each dimension and an orientation of each dimension, it can preserve this knowledge (in a globally consistent manner) as long as another agent is available each time $a_1$ moves to an adjacent cell.

Moreover, the handrail technique also allows agent $a_1$ to generate an initial bijection $f$ as described above with the help of a second agent.
Note that agent $a_1$ has no knowledge of the actual order of the dimensions specified by the global coordinates of the cells, i.e., the direction $a_1$ understands as, e.g.,  $(-1, 5)$ (via its bijection) might be $(+1,2)$ in truth.
Hence, there is no guarantee in which order \exsync{} iterates through the true global dimensions if \exsync{} uses the dimension names that $a_1$ maintains instead of the dimension names that the global coordinates specify.
However, in order to reach every cell, the \emph{order} in which \exsync{} iterates through the dimensions is irrelevant as long as \emph{the same order} is used for each route that $a_1$ takes, and the latter is guaranteed if each time $a_1$ moves to a new cell, there is another agent that helps $a_1$ maintain its knowledge during the move. 

Now we are set to prove our main result for unoriented grids.



\newcounter{tempcounter}
\setcounter{tempcounter}{\value{theorem}}
\setcounter{theorem}{\value{unoricounter}}
\newcounter{tempsection}
\setcounter{tempsection}{\value{section}}
\setcounter{section}{\value{unorisection}}

\begin{theorem}
	Suppose that Assumption~\ref{ass:cycle} holds.
	Then, for any positive integer $n$, $3$ synchronous finite automata, resp.\ $4$ semi-synchronous finite automata, suffice to explore any $n$-dimensional unoriented grid.
\end{theorem}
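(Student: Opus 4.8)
The plan is to simulate the oriented-grid protocol \exsync{} on the virtual stack $\virt_c$ instead of a physical stack, using the handrail technique so that agent $a_1$ maintains a globally consistent sense of direction. First I would reduce the semi-synchronous case to the synchronous one by the usual device: add one synchronizer agent $a_4$ (analogously to Section~\ref{sec:blocks}) that visits the other agents in turn, so it suffices to describe a $3$-agent synchronous protocol; the synchronizer overhead costs exactly one extra agent, matching the claimed bounds. For the synchronous case, the key point is that every building block of Section~\ref{sec:blocks} only moves agents along a single ``stack direction'' and only ever needs to (a) step one position up or down the stack, (b) detect when two agents occupy the same stack position, and (c) occasionally move the whole stack one cell in some global dimension (\movestack{g,i}). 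By Observation~\ref{obs:norepeat} the map $j \mapsto \virt_c[j]$ is injective, so ``same stack position'' is well-defined and, as noted after Lemma~\ref{lem:virtualmove}, detectable by the agents since each carries the level $\level{\virt_c[j]}$ in its state. By Lemma~\ref{lem:virtualmove}, a lone agent can perform a single up- or down-step on the virtual stack in $2$ time steps using only a finite automaton. Hence each of \multstack{k}, \divstack{k}, \isdiv{k}, \initstack{k}, \incstack{k} can be executed verbatim on the virtual stack, with every ``take one step toward/away from the base'' replaced by an invocation of Lemma~\ref{lem:virtualmove}; the relative-speed arguments that make these subroutines correct depend only on the combinatorics of how many steps each agent takes, not on the geometry of the underlying cells, so correctness carries over unchanged (up to the constant factor $2$ in running time, which is irrelevant here).

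The one subroutine that genuinely interacts with the grid geometry is \movestack{g,i}, which must move all agents one cell in global dimension $i$ with sign $g$. Here I would invoke the handrail technique: whenever $a_1$ is to move to an adjacent cell, a second agent accompanies it, and the $2$-agent handrail protocol supplies $a_1$ with an updated bijection $f'$ from ports to $\{-1,+1\}\times\{1,\dots,n\}$ that is globally consistent with the old one. Thus $a_1$ always knows which port realizes ``global dimension $i$, sign $g$'' (under some fixed but a~priori unknown permutation of the true dimensions established by the initial bijection $f$, also generated via the handrail technique with help of a second agent). Since \movestack{g,i} is implemented by having $a_2$ walk down the virtual stack to $a_1$, notify it of the desired global step, have $a_1$ perform that step (updating its handrail), and then walk back up and have $a_2$ (and $a_3$) perform the same step — with an accompanying agent present at each individual cell-move — the handrail invariant is preserved throughout. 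Note the virtual stack $\virt_c$ after such a move is rooted in the new cell $c$; Observation~\ref{obs:norepeat} and Lemma~\ref{lem:virtualmove} hold for every root, so moving the stack is consistent with continuing to traverse it.

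With all building blocks available on the virtual stack, \exsync{} runs line for line: \initstack{3} creates $\virt_c$ of size $3$ rooted at the origin; the counter value $X$ (the stack size) ranges over the odd integers $\ge 3$; \followroute{g} decomposes $X$ into prime exponents exactly as before and walks $a_1$ to the cell whose coordinates, in the dimension naming fixed by $a_1$'s handrail, are $(v_1,\dots,v_n)$ with signs $g$; since a permutation and sign-flip of the true global coordinates is applied uniformly to every route, every cell $(c_1,\dots,c_n)\in\mathbb Z^n$ is still reached for the counter value $\prod_i p_i^{|c_i|}$ (the permutation just relabels which $v_i$ corresponds to which true dimension, and the outer loop over all sign functions $g$ absorbs the sign-flips). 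Finally, the argument of the proof of Theorem~\ref{thm:oriented} that the choice of the next subroutine is determined by a finite amount of state applies unchanged, as does the fact that $a_1$ only acts upon a visit by another agent and finishes before the subroutine terminates. The main obstacle, and the place where Assumption~\ref{ass:cycle} is essential, is ensuring the virtual stack is actually well-founded: the assumption guarantees $G$ is an infinite forest so that parent chains never terminate and Observation~\ref{obs:norepeat} holds, which is precisely what is needed for the DFS-based stack to behave like a genuine unbounded stack; without it the ``stack'' could loop and the counter encoding would collapse. Combining everything yields a $3$-agent synchronous (resp.\ $4$-agent semi-synchronous) protocol exploring every $n$-dimensional unoriented grid satisfying Assumption~\ref{ass:cycle}, as claimed.
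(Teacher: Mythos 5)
Your overall architecture matches the paper's proof --- virtual stack in place of the physical stack, handrail technique for $a_1$'s globally consistent directions, Observation~\ref{obs:norepeat} and Lemma~\ref{lem:virtualmove} to let lone agents traverse the stack, a factor-$2$ time padding, and a synchronizer $a_4$ for the semi-synchronous case. However, there is a genuine gap in your treatment of \movestack{g,i}, and it is exactly the step the paper singles out as the one that cannot be carried over verbatim. You propose to run the oriented-grid implementation: $a_2$ descends to $a_1$, the agents at the end of the stack then ``perform the same step'' in dimension $i$, and you assert that the stack is now the virtual stack rooted in the new cell. This fails because on unoriented grids directions do not commute: if $c'$ is the old root and $c''$ the new one, the virtual stack $\virt_{c''}$ is defined by a DFS of the auxiliary forest starting from $c''$, and the pair sitting at position $X$ of $\virt_{c'}$ has no reason to land on $\cell{\virt_{c''}[X]}$ (or anywhere near it) by taking a single grid step in dimension $i$; the two stacks can diverge physically right from their roots. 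So after your version of \movestack{g,i}, agents $a_2$ and $a_3$ are simply not on the new stack, and every subsequent subroutine breaks.

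The paper closes this gap with a genuinely different implementation: $a_1$ moves from $c'$ to $c''$ as before, but $a_2$ and $a_3$ each fully traverse $\virt_{c'}$ down to $c'$, step to $c''$, and then climb $\virt_{c''}$ from scratch; to recover the old stack size $X$ without being able to count to $X$, $a_2$ descends at speed $1$ and ascends at speed $1/2$ while $a_3$ does the opposite, so that when they meet in aligned states on $\virt_{c''}$ they are at height exactly $X$. This rerouting through the root also conveniently supplies the handrail helper for $a_1$'s move (the paper uses $a_3$, which arrives at $c'$ after $a_2$, and pads with a fixed delay $T_n$ so both agents start climbing $\virt_{c''}$ in sync), and it necessitates one extra semi-synchronous detail you would also need to address: while $a_2$ is already on $\virt_{c''}$ and $a_3$ still on $\virt_{c'}$, the synchronizer $a_4$ must be able to shuttle between them across the two distinct stacks, which requires it to recognize the roots $c'$ and $c''$. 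Your proof plan is correct everywhere else, but without replacing your \movestack{g,i} by something along these lines the argument does not go through.
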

\setcounter{theorem}{\value{tempcounter}}
\setcounter{section}{\value{tempsection}}
\begin{proof}
	As in the proof of Theorem~\ref{thm:oriented}, we will let the agents execute algorithm \exsync{} to explore the given grid.
	However, there are two essential details.
	First, the stack that \exsync{} refers to will now be the virtual stack defined in this section.
	More precisely, for each executed subroutine, the stack used in this subroutine is the virtual stack $\virt_c$ rooted in $c$, where $c$ is the cell in which $a_1$ is located in the very beginning of the subroutine.
	Second, and in contrast, the directions that $a_1$ uses in its exploration will be given by the globally consistent dimensions and orientations of the dimensions.
	By our discussions regarding the handrail technique, $a_1$ can choose arbitrary (but fixed) names for the dimensions and an orientation for each dimension in the beginning, as initially all agents are in the origin.
	Moreover, as we will argue in the very end of the proof, each time $a_1$ moves, another agent will be available to help $a_1$ maintain the chosen globally consistent orientation.
	As the design of \exsync{} ensures that every cell will be explored, what remains to be shown, apart from the availability of a helper agent when $a_1$ moves, is that each subroutine can be performed by $3$ synchronous, resp.\ $4$ semi-synchronous agents, also on unoriented grids.

	In the synchronous case, this follows for \multstack{k}, \divstack{k}, \isdiv{k}, \initstack{k}, and \incstack{k} by Observation~\ref{obs:norepeat} and Lemma~\ref{lem:virtualmove}, which ensure that the virtual stack behaves like a physical stack and can be traversed by any agent, even if alone, in both directions.
	Note that it is not a problem that Lemma~\ref{lem:virtualmove} only guarantees that a move on the virtual stack can be executed in $2$ time steps (and not necessarily $1$)---on unoriented grids we simply allocate two time steps for every time step of the algorithm on oriented grids, and if an agent does not need the second time step to complete its action, it simply waits for one time step.

	The only subroutine in the synchronous setting requiring special care is \movestack{g,i} as this is the only subroutine where $a_1$ moves, and hence, the only subroutine where the cell in which our virtual stack is rooted changes.
	The main issue making the execution of this subroutine somewhat difficult is that in unoriented grids following different directions (specified by ports) is not commutative anymore: leaving a cell $c'$ via some port $\ell$ and the reached cell via some port $\ell'$ does not necessarily lead to the same cell as if we leave $c'$ via port $\ell'$ and the reached cell via port $\ell$.
	In contrast, commutativity of directions holds on oriented grids, and we make use of this fact in \movestack{g,i} by moving the endpoints of the stack one cell in the same direction and being assured that this does not change the fact that we can reach the end of the stack from the base by repeatedly walking north.

	We overcome this obstacle by executing \movestack{g,i} in unoriented grids in a completely different way:
	Let $c'$ denote the root of the virtual stack before executing \movestack{g,i}, and $c''$ the root of the virtual stack after executing \movestack{g,i}, i.e., the virtual stack changes from $\virt_{c'}$ to $\virt_{c''}$.
	For agent $a_1$ nothing changes---it simply moves from $c'$ to $c''$ as it would on oriented grids.
	Agents $a_2$ and $a_3$ first traverse $\virt_{c'}$ until they reach $c'$, then move from $c'$ to $c''$, and finally traverse $\virt_{c''}$ until the new virtual stack $\virt_{c''}$ has the same size as $\virt_{c'}$ had at the start of \movestack{g,i}.
	In order to ensure that the new stack has the same size as the previous one, $a_2$ travels with speed $1$ when traversing $\virt_{c'}$ and with speed $1/2$ when traversing $\virt_{c''}$ whereas $a_3$ travels with speed $1/2$ when traversing $\virt_{c'}$ and with speed $1$ when traversing $\virt_{c''}$; when they meet again on $\virt_{c''}$ (in aligned states, i.e., immediately before both would take another step), they must have traversed the same number of (virtual stack) cells on both stacks.
	Again, executing this behavior on virtual stacks is possible by Observation~\ref{obs:norepeat} and Lemma~\ref{lem:virtualmove}.

	In the semi-synchronous setting, the theorem basically follows by using the semi-synchronous implementations of the subroutines (as presented in Section~\ref{sec:blocks}) and observing that the new synchronous implementation for \movestack{g,i} can be transformed into a semi-synchronous implementation by using the additional agent $a_4$ as a synchronizer (analogous to how this is done for, e.g., subroutine \multstack{k} in Section~\ref{sec:blocks}).
	The only change compared to the case of oriented grids (beyond the new implementation of \movestack{g,i}, and requiring Observation~\ref{obs:norepeat} and Lemma~\ref{lem:virtualmove} to guarantee that virtual stacks can be used in the same way as physical stacks) is that each movement on the virtual stack may take $2$ time steps according to Lemma~\ref{lem:virtualmove}.
	The only effect this has on the viability of our subroutine implementations is that an agent acting as a synchronizer cannot be sure that some other agent that leaves the synchronizer's cell has completed its move on the virtual stack before the synchronizer is scheduled again (and possibly prompts some other agent to move).
	However, this is easily remedied: whenever the synchronizer wants some other agent to take a step on the virtual stack, it simply follows that agent to the respective cell and waits until the agent is in the correct state indicating that the step on the virtual stack is completed.

	Finally, we take care of the requirement that, each time agent $a_1$ moves, a helper agent must be available.
	As the only subroutine in which $a_1$ moves is \movestack{g,i}, and the new implementation of \movestack{g,i} specifies that each agent first moves to the cell $c'$ containing $a_1$, then to the adjacent destination cell $c''$ of $a_1$, and only then onwards, such a helper agent is available:

	In the synchronous case we require $a_3$ to be the helper agent as it arrives later than $a_2$ at cell $c'$ (note that as soon as $a_2$ arrives, $a_1$ is aware of the port via which it reaches $c''$ and can inform $a_2$ about the direction; the helper agent is only needed to let $a_1$ preserve its knowledge of the dimensions during its move).
	We remark that we need to make sure that $a_2$ and $a_3$ spend the same amount of time between finishing the traversal of stack $\virt_{c'}$ and starting the traversal of stack $\virt_{c''}$ to make sure that the sizes of the two stacks will be the same.
	However, this is not difficult: there is a finite upper bound $T_n$ (depending on the dimension $n$ of the grid) for the time it takes to perform the $2$-agent protocol that moves $a_1$ (and $a_3$) from $c'$ to $c''$ and preserves $a_1$'s knowledge; hence both $a_2$ and $a_3$ can simply start the traversal of stack $\virt_{c''}$ $T_n$ time steps after their respective completion of the traversal of stack $\virt_{c'}$.
	In the semi-synchronous case, we can simply use our synchronizer $a_4$ as the helper agent, at a point in time when both $a_2$ and $a_3$ have already completed their traversals of $\virt_{c'}$ and reached $c''$.

	There is a small detail that we omitted in the discussion of the semi-synchronous implementation of \movestack{g,i} that we now take care of: when $a_2$ has already started its traversal of $\virt_{c''}$, but $a_3$ has not finished its traversal of $\virt_{c'}$ yet in the new implementation of \movestack{g,i}, our synchronizer $a_4$ will have to be able to move between those two agents despite the fact that you cannot necessarily reach one from the other by simply leaving cells always via the same port.
	When traveling from $a_3$ to $a_2$, this can be achieved straightforwardly: $a_4$ knows how to travel on either stack, and it also knows when to perform the one step from $c'$ to $c''$ since $a_4$ can detect when it reached the base of stack $\virt_{c'}$ by noticing the presence of $a_1$ in the physical cell $c'$ and checking that it reached $c'$ via an edge that is outgoing from $c'$ in the auxiliary graph $G$.
	The information that $a_4$ is missing in order to do the same when traveling from $a_2$ to $a_3$ is whether or not a reached (physical) cell is $c''$, since $c''$ is not marked by the presence of $a_1$ as $a_1$ has not performed its move yet.
	However, this can be remedied by letting $a_4$ explore all adjacent cells each time it reaches a new cell $c'''$ after taking a step on $\virt_{c''}$; if none of them contains $a_1$, $c''' \neq c''$, and if one of them contains $a_1$, agent $a_4$ can infer whether $c''' = c''$ by asking $a_1$ about the port that leads from $c'$ to $c''$.	
\end{proof}

\setcounter{theorem}{\value{tempcounter}}

\section{Polynomial-Time Exploration}\label{sec:poly}

All algorithms we considered so far require a super-polynomial number of steps to reach a treasure located at distance $D$ from the origin.
In fact, even to encode the coordinates of a cell at distance $D$, we need a stack of size exponential in $D$.
To speed up the exploration, Dobrev et al.~\cite{Dobrev2019} changed the exploration strategy to the following:
They explore a hypercube of side length $h$ centered at the origin (for increasing values of $h$).
For this, they change their stack implementation, that was based on a binary encoding of $n$-tuples, to an implementation based on an encoding of $n$-tuples using a $h$-ary alphabet, which captures exactly the cells in the hypercube.
To perform the exploration of those cells, they require an additional operation, which is to multiply the stack size by $h$, for any (non-constant) value of $h$.
They show that this extra operation can be implemented using one additional agent over their exponential time deterministic protocols, in both the synchronous and semi-synchronous setting.
This yields a protocol that uses 5 agents in the synchronous, and 6 agents in the semi-synchronous setting, respectively.

We will show how to adapt our results from the previous sections to perform the operations \multstack{h}, \isdiv{h}, and \divstack{h} on oriented grids for non-constant values of $h$.
Our implementations will only be applicable for $h = 2^i$ being a power of 2, however this is sufficient for our purpose.
To perform these operations, we will need one extra agent, mainly to encode the value of $h$.
In fact, we will show that we can use multiple calls to \multstack{k} for constant values of $k$, each time interpreting different agents as encoding a stack.

Once we have these stack operations for non-constant values of $h$, we can use the result in a black-box fashion with techniques from Dobrev et al. \cite{Dobrev2019}, to get a protocol with polynomial exploration time using 4 agents in the synchronous, and 5 agents in the semi-synchronous setting.

\subparagraph{\multstack{h}}
We will focus on this operation, the adaptations for the other operations are analogous.
Let us also focus on the synchronous setting, in the semi-synchronous setting we can use one additional agent which acts as a synchronizer, similar to before.
We will overload notation, and use \multstack{k, a_1, a_2, a_3}, \divstack{k, a_1, a_2, a_3}, and \isdiv{k, a_1, a_2, a_3} to indicate that we manipulate the stack formed by agents $a_1, a_2$, and $a_3$, where $a_1$ forms the base and $a_2$ and $a_3$ are co-located at distance $X$ from $a_1$.
For simplicity, we will also denote the base of the stack by $a_1$.

\begin{algorithm}
	\caption{\multstack{h}}
	\label{alg:multh}
	\begin{algorithmic}[1]
		\item[\textbf{Input:} $h = 2^i$, a power of 2]
		\item[\textbf{Initially:} $a_2, a_3$ co-located at $a_1 + (h, 0, \dots, 0)$, and $a_4$ at $a_1 + (X, 0, \dots, 0)$]
		\item[\textbf{On Termination:} $a_2, a_3$ co-located at $a_1 + (h, 0, \dots, 0)$, and $a_4$ at $a_1 + (hX, 0, \dots, 0)$]
		\While{\isdiv{2, a_1, a_2, a_3}} \Comment{Multiplication by $h$}
		\State \divstack{2, a_1, a_2, a_3}
		\State \multstack{3, a_1, a_2, a_3}
		\State Agent $a_3$ moves north until it meets $a_4$
		\State \multstack{2, a_1, a_3, a_4}
		\State Agent $a_3$ moves south until it meets $a_2$
		\EndWhile
		\While{\isdiv{3, a_1, a_2, a_3}} \Comment{Restore the value of $h$}
		\State \divstack{3, a_1, a_2, a_3}
		\State \multstack{2, a_1, a_2, a_3}
		\EndWhile
	\end{algorithmic}
\end{algorithm}

We now argue why \Cref{alg:multh} performs the desired operation:
Let $j = 1, \dots, i$ be the iteration of the first while loop.
At the end of the $j^{th}$ iteration, agents $a_2$ and $a_3$ will be co-located at distance $3^j 2^{i-j}$ from agent $a_1$, while agent $a_4$ will be at distance $2^j X$ from $a_1$.
Thus, after $i$ iterations, and remembering that $h = 2^i$, we have that $a_2$ and $a_3$ will be at distance $3^i$, and $a_4$ will be at distance $2^i X = hX$ from $a_1$.
Finally, with the same kind of argument, we also get that after the execution of the second while loop, agents $a_2$ and $a_3$ will have returned to distance $2^i = h$ from $a_1$.

As mentioned previously, the other stack operations can be performed analogously:
For \divstack{h}, we can just replace the calls to \multstack{2}, whenever the agents $a_1, a_3$, and $a_4$ act as the stack, with \divstack{2}.
For the check of divisibility by $h$, we can first execute the first while loop of the \divstack{h} routine.
If this leaves agent $a_2$ at distance $1$ from $a_1$, we know that $X$ is divisible by $h$, and otherwise we know that it is not.
To restore the positions of $a_2, a_3$ and $a_4$, we can use the second while loop of the \divstack{h} routine, where we additionally also add a call to \multstack{2, a_1, a_3, a_4} (with appropriate movement of the agent $a_3$ in between).

\paragraph{Runtime.}
Now that we have seen how to perform the necessary stack operations, we still need to argue that they can be performed efficiently.
Let $X_{\max}$ be the maximum distance between agent $a_1$ and $a_4$ during the execution of a stack operation.
Further, let us assume that throughout one operation, this distance is always much larger than the value of $h$.
As before we focus on the operation \multstack{h}:
In that case, $X_{\max}$ is the stack size at the end of the operation.
By the assumption that the stack size is always larger than $h$, the complexity of each iteration of the first while loop is $O(X')$, where $X'$ is the stack size at the end of the iteration.
As the stack size is doubled in each iteration, the complexity is dominated by its last term, which is $O(X_{\max})$.
Again, as we assume that $X_{\max} \gg h$, the second while loop does not change this asymptotic complexity.
Thus, this is the complexity of one stack operation in the synchronous model.

For the semi-synchronous model, we have that the agent responsible for synchronization moves $O(X)$ steps to perform one normal step on a stack of size $X$.
Thus, the complexity in the semi-synchronous model is $O(X_{\max}^2)$.

\paragraph{Exploring the Hypercube.}
Given that we now know how to perform stack operations for non-constant values, we can use the algorithm presented in Dobrev et al. \cite[Algorithm 7]{Dobrev2019} for exploring a hypercube of side length $h$.
We briefly sketch their ideas:
The goal is to explore all $n$-tuples from a $h$-ary alphabet, i.e., all cells $c = (c_1, \dots, c_n)$ such that $0 \leq c_i \leq h$ for all $i$.
The authors of~\cite{Dobrev2019} show that this can be achieved by visiting them in lexicographical order.
A tuple $(c_1, \dots, c_n)$ is encoded as a distance $((((c_1 \cdot h + c_2) \cdot h + c_3) \cdot h + c_4) \dots ) \cdot h + c_n$, and they show that the generalized stack operations are sufficient to enumerate and visit all tuples in a lexicographic order.

\paragraph{Finding the Treasure.}
The final algorithm by Dobrev et al.~\cite{Dobrev2019} starts with exploring a hypercube of side length $h = 2$ centered at the origin.
As long as the treasure is not found, $h$ is doubled and a hypercube of the now doubled side length $h$ is searched again.
As shown in~\cite{Dobrev2019}, the cost of the exploration is dominated by the exploration of the last hypercube, which is shown to have side length at most $4D$.
In this exploration, the maximum value encoded by our stack is $X = \Theta(D^n)$, where we have $h = D$.
In particular, this means that we have $X \gg h$, which implies that we can upper bound the runtime of one stack operation by $O(X)$ in the synchronous, and $O(X^2)$ in the semi-synchronous model.
Finally, they note that the hypercube consists of $D^n$ cells, for each of which we perform a constant number of stack operations.
Thus, the total complexity is $O(X^2)$ in the synchronous and $O(X^3)$ in the semi-synchronous model, where $X = \Theta(D^n)$.

Finally, the authors of \cite{Dobrev2019} compare this to $V(D)$ which is the volume of the $\ell_1$-ball containing all cells of distance at most $D$ from the origin.
As $V(D) = \Theta(D^n)$ (for constant dimension $n$), we finally get the following:
\thmpoly*

\section{Conclusion and Open Problems}

In this work, we studied the problem of exploring the $n$-dimensional oriented grid using autonomous agents controlled by deterministic finite automata.
We provided tight bounds for a number of settings:
For synchronous and semi-synchronous exploration, we showed that $3$, resp.\ $4$, agents are sufficient.
Further, we showed that under a natural assumption, we can get the same results even on unoriented grids.
We also made progress on the problem of finding a treasure in a number of steps that is polynomial in its distance from the starting point.
We showed that just one extra agent is enough for polynomial time exploration.

However, still a number of open questions remain:
Can we prove a higher lower bound on the number of agents required to explore an unoriented grid, than we can for oriented grids?
Is there a protocol that achieves both an optimal number of agents and polynomial time exploration?
For $n \geq 3$, can we improve the semi-synchronous protocol using randomness (the best known lower bound states that at least 3 agents are required)?
How much can we reduce the computational power of the agents without compromising the optimal bounds?
In our protocols, we can, e.g., replace agent $a_1$ with a movable marker, and in the semi-synchronous protocols, we can replace all agents except one with a movable marker; can we allow further/other restrictions?


\bibliographystyle{alpha}
\bibliography{space-ants}
%

\end{document}